\documentclass[acmsmall,10pt,screen]{acmart}
\settopmatter{printfolios=true,printccs=false,printacmref=false}

\acmJournal{TOPLAS}
\acmYear{2021}
\acmISBN{} 
\acmDOI{} 
\startPage{1}

\setcopyright{none}



\usepackage{booktabs}   
\usepackage{subcaption} 

\usepackage{amsmath}
\usepackage{amssymb}
\usepackage{bm}

\usepackage{mathtools}
\DeclarePairedDelimiterX\set[1]\lbrace\rbrace{\def\given{\;\delimsize\vert\;}#1}

\usepackage[nocomma]{optidef}
\usepackage{algorithm}
\usepackage{enumitem}
\usepackage{algorithmicx}
\usepackage{algpseudocode}
\usepackage{pifont}
\usepackage{listings}
\usepackage[symbol]{footmisc}
\usepackage{xcolor}
\newcommand{\greencheck}{{\color{green!60!gray}\checkmark}}
\newcommand{\myred}[1]{{\color{red!60!gray} #1}}
\usepackage[frozencache=true, cachedir = \detokenize{./minted-cache}]{minted}

\usepackage{multirow}

\usepackage{enumitem}
\newlist{steps}{enumerate}{1}
\setlist[steps, 1]{label = \textbf{Step} \arabic*:}
\newcommand{\oomit}[1]{}

\usepackage[algo2e, ruled, lined, boxed, linesnumbered, commentsnumbered]{algorithm2e}
\SetKwComment{algocomment}{$\triangleright$\ }{}

\SetCommentSty{mycommfont}




\newtheorem{lemma}{Lemma}

\newtheorem{theorem}{Theorem}
\newtheorem{remark}{Remark}
\newtheorem{definition}{Definition}
\newtheorem{example}{Example}

\usepackage[capitalize]{cleveref}
\crefname{lemma}{Lem.}{Lem.}
\creflabelformat{lemma}{#2\textup{#1}#3}
\crefname{example}{Exmp.}{Exmp.}
\creflabelformat{example}{#2\textup{#1}#3}
\crefname{section}{Sect.}{Sect.}
\creflabelformat{section}{#2\textup{#1}#3}
\crefname{appendix}{Appx.}{Appx.}
\creflabelformat{appendix}{#2\textup{#1}#3}
\crefname{definition}{Def.}{Def.}
\creflabelformat{definition}{#2\textup{#1}#3}
\crefname{theorem}{Thm.}{Thm.}
\creflabelformat{theorem}{#2\textup{#1}#3}
\crefname{proposition}{Prop.}{Prop.}
\creflabelformat{proposition}{#2\textup{#1}#3}
\crefname{corollary}{Cor.}{Cor.}
\creflabelformat{corollary}{#2\textup{#1}#3}
\crefname{problem}{Problem}{Problem}
\creflabelformat{problem}{#2\textup{#1}#3}
\crefname{algorithm}{Alg.}{Alg.}
\creflabelformat{algorithm}{#2\textup{#1}#3}
\crefname{enumi}{}{}

\newcommand{\revise}[1]{#1}

\newenvironment{manualexample}[1]{%
  \manualexampleinner
}{\endmanualexampleinner}

\newcommand{\qm}{\textbf{Q}}
\newcommand{\mm}{\textbf{m}}

\newcommand{\Real}{\mathbb{R}}
\newcommand{\Nat}{\mathbb{N}}

\newcommand{\tool}[1]{\textsc{#1}}

\newcommand{\xmark}{\ding{55}}%

\begin{document}

\title[Synthesizing Invariants for Polynomial Programs by Semidefinite Programming]{Synthesizing  Invariants for Polynomial Programs by Semidefinite Programming}     


\author{Hao Wu}
\orcid{0000-0001-9368-4744}
\affiliation{
    \institution{SKLCS, Institute of Software, University of Chinese Academy of Sciences}
    \city{Beijing}
    \country{China}
}
\email{wuhao@ios.ac.cn}

\author{Qiuye Wang}
\orcid{0000-0001-5138-3273}
\affiliation{
    \institution{SKLCS, Institute of Software, University of Chinese Academy of Sciences}
    \city{Beijing}
    \country{China}
}
\email{wangqye@ios.ac.cn}

\author{Bai Xue}
\orcid{0000-0001-9717-846X}
\affiliation{
    \institution{SKLCS, Institute of Software, Chinese Academy of Sciences \& University of CAS}
    \city{Beijing}
    \country{China}
}
\email{xuebai@ios.ac.cn}

\author{Naijun Zhan}
\orcid{0000-0003-3298-3817}
\affiliation{
\department{School of Computer Science}
    \institution{Peking University \&}
    \institution{SKLCS, Institute of Software, Chinese Academy of Sciences}
    \city{Beijing}
    \country{China}
}
\email{znj@ios.ac.cn}

\author{Lihong Zhi}
\affiliation{
    \institution{KLMM, Academy of Mathematics and Systems Science, Chinese Academy of Sciences \& University of CAS}
    \city{Beijing}
    \country{China}
}
\email{lzhi@mmrc.iss.ac.cn}

\author{Zhi-Hong Yang}
\orcid{0000-0002-7489-9237}
\affiliation{
    \institution{School of Mathematics and Statistics, Central South University}
    \state{Changsha}
    \country{China}
}
\email{yangzhihong@csu.edu.cn}

\renewcommand{\shortauthors}{Wu et al.}

\begin{abstract}
Constraint-solving-based program invariant synthesis takes a parametric invariant template and encodes the (inductive) invariant conditions into constraints.
The problem of characterizing the set of all valid parameter assignments is referred to as the \emph{strong invariant synthesis problem}, while the problem of finding a concrete valid parameter assignment is called the \emph{weak invariant synthesis problem}.
\revise{
For both problems, the challenge lies in solving or reducing the encoded constraints, which are generally non-convex and lack efficient solvers. 
Consequently, existing works either rely on heuristic optimization techniques (such as bilinear matrix inequalities) or resort to general-purpose solvers (such as quantifier elimination), leading to a trade-off between completeness and efficiency.
}

\revise{
In this paper, 
we propose two novel algorithms for synthesizing invariants of polynomial programs using semidefinite programming (SDP):
(1) The Cluster algorithm targets the strong invariant synthesis problem for polynomial invariant templates.
Leveraging robust optimization techniques, it solves a series of SDP relaxations and yields a sequence of increasingly precise under-approximations of the set of valid parameter assignments.
We prove the algorithm's soundness, convergence, and weak completeness under a specific robustness assumption on templates. 
Moreover, the outputs can simplify the weak invariant synthesis problem.
(2) The Mask algorithm addresses the weak invariant synthesis problem in scenarios where the aforementioned robustness assumption does not hold, rendering the Cluster algorithm ineffective.
It identifies a specific subclass of invariant templates, termed masked templates, involving parameterized polynomial equalities and known inequalities. 
By applying variable substitution, the algorithm transforms constraints into an equivalent form amenable to SDP relaxations.
}
Both algorithms have been implemented and demonstrated superior performance compared to state-of-the-art methods in our empirical evaluation.
\end{abstract}

\begin{CCSXML}
<ccs2012>
<concept>
<concept_id>10003752.10010124.10010138.10010139</concept_id>
<concept_desc>Theory of computation~Invariants</concept_desc>
<concept_significance>500</concept_significance>
</concept>
<concept>
<concept_id>10003752.10010124.10010138.10010142</concept_id>
<concept_desc>Theory of computation~Program verification</concept_desc>
<concept_significance>300</concept_significance>
</concept>
<concept>
<concept_id>10003752.10003790.10002990</concept_id>
<concept_desc>Theory of computation~Logic and verification</concept_desc>
<concept_significance>100</concept_significance>
</concept>
<concept>
<concept_id>10002950.10003714.10003716.10011138.10010042</concept_id>
<concept_desc>Mathematics of computing~Semidefinite programming</concept_desc>
<concept_significance>300</concept_significance>
</concept>
</ccs2012>
\end{CCSXML}

\ccsdesc[500]{Theory of computation~Invariants}
\ccsdesc[300]{Theory of computation~Program verification}
\ccsdesc[100]{Theory of computation~Logic and verification}
\ccsdesc[300]{Mathematics of computing~Semidefinite programming}

\keywords{program verification, invariant synthesis, sum-of-squares relaxations, semidefinite programming}

\maketitle

\section{Introduction}
\label{sec:introduction}

The dominant approach to program verification is \emph{Floyd-Hoare-Naur's inductive assertion method}~\cite{floyd67,hoare69,naur66}, which is based on Hoare logic~\cite{hoare69}.
The central concept of Hoare logic is the \emph{Hoare triple}
with the form 
\begin{equation*}
\{P\}C\{Q\},    
\end{equation*}
where $C$ is a piece  of program to be verified, $P$ is the precondition, and $Q$ is the postcondition.
The program $C$ is said to be \emph{partially correct} with respect to specifications $P$ and $Q$ if, assuming the precondition $P$ holds before executing $C$ and the program $C$ terminates, then the postcondition $Q$ will hold upon the completion of $C$.



In Hoare logic, an \emph{invariant} is an assertion associated with a particular program location, and it holds  whenever the location is reached during program execution.
An \emph{inductive invariant} is a specific type of invariant that satisfies the inductive property: 
If the assertion holds at a program location, 
then it is preserved during subsequent visits to that location.
The difference between invariants and inductive invariants is discussed in detail in~\cite{Manna04}. 
For the purpose of this paper, we will solely focus on inductive invariants, and for simplicity, we will refer to them simply as "invariants", unless otherwise stated.

Invariant generation stands as a crucial aspect of Hoare-style program verification.
The effectiveness of the verification process heavily relies on the ability to discover appropriate invariants that accurately capture the behavior and properties of the program throughout its execution.
Though this has been shown to be undecidable in general~\cite{OS04ICALP},
many efforts have been put into this area, resulting in various invariant synthesis techniques, 
including approaches based on 
constraint solving (discussed below),
recurrence analysis~\cite{kincaid2017nonlinear, humenberger2018invariant}, 
abstract interpretation~\cite{rck04a, rodriguez2007automatic}, 
Craig interpolation~\cite{LS17, Gan2020},
machine learning~\cite{si2020code2inv, yao2020learning}, 
and so on. 

Constraint-solving-based invariant synthesis, also known as template-based invariant synthesis, is promising and therefore prominent for discovering program invariants.
The general workflow of these methods can be summarized as follows:
The algorithm takes a user-specified parametric formula as input, encodes the invariant conditions into constraints, and attempts to reduce or solve these constraints.
The \emph{strong invariant synthesis problem} aims to characterize the set of all valid parameter assignments that satisfy the invariant conditions. This typically involves reducing the initial constraints into new constraints solely on the parameters.
In contrast, the \emph{weak invariant synthesis problem} seeks to find a single concrete invariant that satisfies the constraints. This usually involves solving the constraints to identify a valid parameter assignment that defines a specific invariant.

In this paper, we consider both the strong and the weak invariant synthesis problem for polynomial programs over real-valued variables, where the invariant templates are given as conjunctions of polynomial inequalities.
In this setting, the conditions for a template to be an invariant can be expressed as a first-order logic formula. 
It is worth noting that if both the program and the specifications (precondition and postcondition) are polynomial, the truth of the formula is decidable, as per Tarski's theorem~\cite{Tarski51}.
For the weak invariant synthesis problem, one major approach is to use Putinar's Positivstellensatz (see~\cref{thm:putinar}) to transform the invariant conditions into constraints containing sum-of-squares (SOS) polynomials, i.e., bilinear matrix inequalities (BMI).
However, solving the resulting constraints is still NP-hard~\cite{toker95bmi, blondel95}, and hence existing works mostly rely on general-purpose solvers for non-linear real arithmetic~\cite{YZZ+10FCSC, chatterjee2020polynomial, GHM+23OOPSLA} or heuristic strengthening strategies~\cite{LWY+14FCS, AGM15SAS, Cousot05VMCAI} to tackle them.
For the strong invariant synthesis problem, only a general doubly exponential upper bound based on quantifier elimination is known~\cite{kapur05inv}.

\revise{
\paragraph{Contributions} 
We propose two novel SDP-based approaches for the invariant synthesis problem of polynomial programs over real-valued variables.

The Cluster algorithm (\cref{sec:under-approximation} and \cref{sec:extension}) tackles the strong invariant synthesis problem for invariant templates expressed as conjunctions of parameterized polynomial inequalities.
To characterize the set of valid assignments of parameters $\bm{a}$, called the \emph{valid set}, we employ Lasserre's technique~\cite{lasserre15} to construct a series of SOS relaxations of the invariant conditions.
Notably, our encoding of SOS relaxations can be solved as SDPs, which is not the case as in~\cite{GHM+23OOPSLA, chatterjee2020polynomial}.
For any $D\in \mathbb{N}$,
our Cluster algorithm produces a series of under-approximations $R_{I,1}, R_{I,2},\dots, R_{I,D}$ of the valid set.
For each $d$ such that $1\le d\le D$, the $d$th under-approximation is defined by $R_{I,d}=\{\bm{a}\mid h_d(\bm{a})\le 0\}$, where $h_d(\bm{a})$ is a polynomial of degree at most $d$. 
As $D$ goes to infinity, the sequence of under-approximations converges to the valid set.
Furthermore, under the robustness assumption that the valid set has an interior point, we establish a semi-completeness result: 
for sufficiently large $D$, the Cluster algorithm will produce a non-empty under-approximation $R_{I,d}$ of the valid set for some $1\le d\le D$.
In such cases,
the weak invariant synthesis problem reduces to solving $h_d(\bm{a})\le 0$.


The Mask algorithm (\cref{sec:masked}) is designed for the weak invariant synthesis problem in scenarios where the Cluster algorithm fails due to the violation of the robustness assumption.
Such scenarios often arise when the invariant templates include equalities.
The Mask algorithm focuses on \emph{masked templates},
a specific subclass of invariant templates containing parameterized polynomial equalities and known inequalities.
By using variable substitution, the Mask algorithm transforms the invariant conditions into constraints that again allow for a hierarchy of SDP relaxations.

The two algorithms have been implemented and tested on two sets of benchmarks, depending on whether the invariant templates include equalities.
Compared with state-of-the-art constraint-solving-based and learning-based methods, both of our approaches demonstrate advantages in terms of effectiveness and efficiency.
}

\paragraph{Structure}
The rest of this paper is organized as follows: 
In \cref{sec:pre}, 
we introduce basic notions and algebraic tools that will be used. 
\cref{sec:under-approximation} explains Lasserre's technique and proposes the Cluster algorithm.
\cref{sec:extension} discusses additional extensions to enhance the expressiveness of the algorithm.
\cref{sec:masked} introduces the definition of masked templates and presents the Mask algorithm.
We report the experimental results in \cref{sec:experimentsnew} and discuss related work in \cref{sec:related}. 
Finally, \cref{sec:conclusion} concludes the paper.
\section{Preliminaries}
\label{sec:pre}

In the following, we first fix the notation used throughout the rest of this paper.
In \cref{sec:pre-sos}, we give a brief introduction to SOS relaxations, which serves as the fundamental technique in our algorithms.
In \cref{sec:2-2}, we formally define the invariant synthesis problems of interest.

\paragraph{Basic Notations}
The following basic notions will be used.   
Let $\mathbb{R}$ and $\mathbb{N}$ denote the set of real numbers and the set of natural numbers, respectively.
We use boldface letters to denote vectors (such as $\bm{x}$), 
vector-valued functions (such as $\bm{f}(\bm{x})$), and vector of constants (such as $\bm{0}$).
The comparison between vectors is element-wise.
\revise{
Given a function $f:\mathbb{R}^n\to \mathbb{R}$, the $0$-sublevel set of $f$ is the set $\{\bm{x}\in \mathbb{R}^n\mid f(\bm{x})\le 0\}$.
For a vector $\bm{x}=(x_1,\dots,x_n)\in \Real^n$,
we use $\|\bm{x}\|_1= \sum_{i=1}^n |x_i|$ and $\|\bm{x}\|_2= \sqrt{\sum_{i=1}^n x_i^2}$ to denote its $l_1$-norm and $l_2$-norm, respectively.
}
We say $\bm{x}_0\in \Real^n$ is an interior point of a set $S\subseteq \Real^n$ if there exists $\epsilon>0$ such that $\bm{x}\in S$ for all $\bm{x}$ satisfying $\|\bm{x}-\bm{x}_0\|_2\le \epsilon$.
We use $\mu$ to denote the Lebesgue measure.
Given a hyper-rectangle $C=[a_1,b_1]\times \dots \times [a_n,b_n]\in \mathbb R^n$ with $-\infty<a_i<b_i<\infty$, the volume of $C$ is $\mu(C)=\prod_{i=1}^n (b_i-a_i)$.

We will also use the following notations from real algebraic geometry.
Let $\bm{x}=(x_1,\dots,x_n)$ denote a vector of variables in $\Real^n$.
$\mathbb{R}[\bm{x}]$ denotes the ring of polynomials in variables $\bm{x}$, and $\mathbb{R}^d[\bm{x}]$ denotes the set of polynomials of degree less than or equal to $d$ in variables $\bm{x}$, where $d \in \mathbb{N}$. 
For convenience, we do not explicitly distinguish a polynomial $p \in \mathbb{R}[\bm{x}]$ and the function $p(\bm{x})$ it introduces.
A basic semialgebraic set $\mathcal{K}\subseteq \mathbb R^n$ is of the form $\{\bm{x}\in \mathbb{R}^n \mid p_1(\bm{x}) \diamond 0, \dots, p_m(\bm{x}) \diamond 0\}$, 
where $p_i(\bm{x}) \in \mathbb{R}[\bm{x}]$ and 
each $\diamond$ can be one of $\{ < , \le, = , \ge, > \}$.  
A basic closed semialgebraic set is of the form $\{\bm{x}\in \mathbb{R}^n \mid p_1(\bm{x}) \ge 0, \dots, p_m(\bm{x}) \ge 0\}$.
A semialgebraic set is of the form $\bigcup_{i=1}^{n} \mathcal{K}_i$,  where each $\mathcal{K}_i$ is a basic semialgebraic set. 
We say a polynomial $p(\bm{x})$ is non-negative (resp. strictly positive) over $\mathcal{K}$ if $p(\bm{x})\ge 0$ (resp. $p(\bm{x})>0$) for all $\bm{x}\in \mathcal{K}$.

\subsection{SOS Relaxations}
\label{sec:pre-sos}

The SOS relaxation is a well-established technique in polynomial optimization.
Its basic idea is to approximate a non-convex polynomial optimization problem by a sequence of convex optimization problems.
In this part, we introduce necessary concepts related to this technique and demonstrate a typical application.
For more comprehensive technical details, please  refer to~\cite{marshall2008book,lasserre2009book}.

\paragraph{Putinar's Representation Theorem}
A polynomial $p(\bm{x})\in \mathbb R[\bm{x}]$ is said to be an SOS polynomial if it can be expressed as $p(\bm{x})=\sum_{i=1}^m p_i(\bm{x})^2$, where $p_i(\bm{x})\in \mathbb R[\bm{x}]$ and $m\in \mathbb N$.
Similar to $\mathbb R[\bm{x}]$ and $\mathbb R^d[\bm{x}]$, we use $\Sigma[\bm{x}]$ and $\Sigma^{d}[\bm{x}]$ to denote the set of SOS polynomials and the set of SOS polynomials of degree less than or equal to $d$ in variables $\bm{x}$, respectively. 
Since the degree of an SOS polynomial must be even, we have $\Sigma^{2d}[\bm{x}]=\Sigma^{2d+1}[\bm{x}]$ for any $d\in \Nat$.

\begin{definition}[Quadratic Module~\cite{marshall2008book}]
A subset $\qm$ of $\mathbb{R}[\bm{x}]$ is called a \emph{quadratic module} if it contains $1$ and is closed under addition and multiplication with squares, i.e.,
\begin{equation*}
    1 \in \qm, \quad \qm + \qm \subseteq \qm, \quad \text{and} \quad p^2\qm \subseteq \qm \text{ for all } p\in \mathbb{R}[\bm{x}].
\end{equation*}
\end{definition}

\revise{
Given a vector of polynomials 
\begin{equation}\label{eq:p}
    \bm{p}(\bm{x})=\big(p_1(\bm{x}),  \dots,p_m(\bm{x})\big),
\end{equation}
and let $\mathcal{K}$ be the basic closed semialgebraic set defined by $\bm{p}(\bm{x})\ge \bm{0}$, i.e.,
\begin{equation}\label{eq:K}
\mathcal K=\set[\big]{\bm{x}\in \Real^n \given p_1(\bm{x})\ge 0, \dots, p_m(\bm{x})\ge 0}    
\end{equation}
we define the quadratic module generated by $\bm{p}$ as follows:
}

\begin{definition} \label{def:qm}
Let $\bm{p}(\bm{x})$ be defined as above, we denote by $\qm(\bm{p})$ the smallest quadratic module generated by polynomials in $\bm{p}$, i.e.,
\begin{equation*}
    \qm(\bm{p}) = 
    \set[\bigg]{ \sigma_0 + \sum_{i=1}^{m} \sigma_i p_i \given \sigma_i \in \Sigma[\bm{x}] }.
\end{equation*}
Furthermore, a quadratic module $\qm(\bm{p})$ is called \emph{Archimedean}, or satisfies the \emph{Archimedean condition}, 
if $N - \|\bm{x}\|_2^2 \in \qm(\bm{p})$ for some constant $N \in \mathbb{N}$.
\end{definition}

\revise{
Since SOS polynomials in $\Sigma[\bm{x}]$ are non-negative over $\Real^n$,
it is easy to see that the following lemma holds.

\begin{lemma}\label{lem:qm}
    Given $\bm{p}(\bm{x})$ and $\mathcal K$ as defined in \cref{eq:p} and \cref{eq:K}, respectively.
    If polynomial $f\in \qm(\bm{p})$, then $f(\bm{x})$ is \emph{non-negative} over $\mathcal{K}$.
\end{lemma}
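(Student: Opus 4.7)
The plan is to unfold the definition of $\qm(\bm{p})$ and then invoke the two elementary sign facts: that every SOS polynomial is non-negative on all of $\Real^n$, and that every $p_i$ is non-negative precisely on $\mathcal K$.

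First I would fix an arbitrary $f \in \qm(\bm{p})$ and use \cref{def:qm} to write
\begin{equation*}
    f(\bm{x}) = \sigma_0(\bm{x}) + \sum_{i=1}^{m} \sigma_i(\bm{x})\, p_i(\bm{x}),
\end{equation*}
for some $\sigma_0,\sigma_1,\dots,\sigma_m \in \Sigma[\bm{x}]$. Next I would fix an arbitrary $\bm{x} \in \mathcal K$ and argue termwise. By the definition of an SOS polynomial, each $\sigma_i$ is a finite sum of squares of real polynomials, hence $\sigma_i(\bm{x}) \ge 0$ for every $\bm{x} \in \Real^n$, and in particular for our chosen $\bm{x} \in \mathcal K$. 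By the definition of $\mathcal K$ in \cref{eq:K}, we also have $p_i(\bm{x}) \ge 0$ for every $1 \le i \le m$.

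Combining these two observations, each product $\sigma_i(\bm{x})\, p_i(\bm{x})$ is non-negative, and so is $\sigma_0(\bm{x})$. Summing non-negative reals yields $f(\bm{x}) \ge 0$. Since $\bm{x} \in \mathcal K$ was arbitrary, this establishes non-negativity of $f$ on $\mathcal K$, completing the proof.

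There is no real obstacle here: the statement is essentially a one-line consequence of the definition, and the only thing to be careful about is to explicitly record the two sign facts (SOS non-negativity on all of $\Real^n$ versus the $p_i$ non-negativity only on $\mathcal K$) rather than conflating them.
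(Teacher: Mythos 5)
Your proof is correct and is exactly the argument the paper intends: the paper dispenses with \cref{lem:qm} in one sentence ("Since SOS polynomials in $\Sigma[\bm{x}]$ are non-negative over $\Real^n$, it is easy to see that the following lemma holds"), and your termwise expansion via \cref{def:qm} simply spells out that remark. No differences in approach and no gaps.
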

}

By \cref{lem:qm}, the Archimedean condition $N-\|\bm{x}\|_2^2=N-\sum_{i=1}^n x_i^2 \ge 0$ over $\mathcal{K}$ implies that $\mathcal{K}$ is bounded.
On the other hand, if $\mathcal{K}$ is known to be bounded in a $n$-dimensional ball $\{\bm{x}\in \Real^n \mid N - \|\bm{x}\|_2^2 \ge 0 \}$ for some $N\in \Nat$, we can introduce a redundant polynomial $p_{m+1}=N - \|\bm{x}\|_2^2$ and denote $\bm{p}'(\bm{x})=\big(p_1(\bm{x}), \dots,p_m(\bm{x}), p_{m+1}(\bm{x})\big)$, then $\qm(\bm{p}')$ is Archimedean.
Therefore, the Archimedean condition can be intuitively interpreted as compactness, i.e., being closed and bounded.

Now, we present an important representation theorem in real algebraic geometry, called Putinar's Positivstellensatz~\cite{putinar1993},
which provides a characterization of polynomials that are \emph{locally positive} over a compact basic closed semialgebraic set.
For convenience, we use the formulation in~\cite{lasserre2009book}.

\begin{theorem}[Putinar's Positivstellensatz~{\cite[Thm.~2.14]{lasserre2009book}}]
\label{thm:putinar}
    Given $\bm{p}(\bm{x})$ and $\mathcal K$ as defined in \cref{eq:p} and \cref{eq:K}, respectively.
    If $\qm(\bm{p})$ is Archimedean and a polynomial $f\in \mathbb R[\bm{x}]$ is \emph{strictly positive} over $\mathcal{K}$,
    then $f\in \qm(\bm{p})$.
\end{theorem}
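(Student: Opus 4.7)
The plan is to argue by contradiction via a geometric Hahn–Banach separation. Assume, for contradiction, that $f \notin \qm(\bm{p})$. Since $\qm(\bm{p})$ is a convex cone containing $1$ in the real vector space $\mathbb{R}[\bm{x}]$, and the singleton $\{f\}$ is disjoint from it, the Eidelheit–Kakutani separation theorem yields a nonzero $\mathbb{R}$-linear functional $L: \mathbb{R}[\bm{x}] \to \mathbb{R}$ with $L(q) \geq 0$ for every $q \in \qm(\bm{p})$ and $L(f) \leq 0$. Since $1 \in \qm(\bm{p})$, rescaling lets me assume $L(1) = 1$.

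Next, I would exploit the Archimedean hypothesis to turn $L$ into a bounded functional. The standard intermediate lemma is that, whenever $\qm(\bm{p})$ is Archimedean, for every $g \in \mathbb{R}[\bm{x}]$ there exists $N_g \in \mathbb{N}$ with $N_g \pm g \in \qm(\bm{p})$. I would prove this by induction on $\deg g$, starting from $N - \|\bm{x}\|_2^2 \in \qm(\bm{p})$ to obtain bounds for each $x_i$, then combining such bounds multiplicatively via the identity $4ab = (a+b)^2 - (a-b)^2$, which keeps the ``bounded'' class closed under products. Applying this to $L$ immediately gives $|L(g)| \leq N_g$ for every polynomial $g$.

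With $L$ bounded, I would construct a representing measure. Fix a closed ball $B \supseteq \mathcal{K}$ guaranteed by the Archimedean condition. Using the bound $|L(g)| \leq N_g \leq C\sup_{B}|g|$ (valid on $\mathbb{R}[\bm{x}]$ since polynomials bounded in $\qm(\bm{p})$ are bounded on $B$), I extend $L$ continuously to $C(B)$ by Stone–Weierstrass density, obtaining a positive linear functional on $C(B)$. The Riesz–Markov–Kakutani theorem then supplies a finite regular Borel probability measure $\mu$ on $B$ with $L(g) = \int_B g \, d\mu$ for every $g \in \mathbb{R}[\bm{x}]$. To show $\mathrm{supp}(\mu) \subseteq \mathcal{K}$, I observe that for each $i$ and every $g \in \mathbb{R}[\bm{x}]$ the polynomial $p_i g^2$ belongs to $\qm(\bm{p})$, hence $\int_B p_i g^2 \, d\mu \geq 0$; approximating the indicator of $\{p_i < 0\}$ by suitable $g^2$ forces $\mu(\{p_i < 0\}) = 0$ for each $i$, so $\mu$ is concentrated on $\mathcal{K}$.

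Concluding is then immediate: $f$ is continuous and strictly positive on the compact set $\mathcal{K}$, so $f \geq \delta > 0$ on $\mathrm{supp}(\mu)$, giving $L(f) = \int_{\mathcal{K}} f \, d\mu \geq \delta > 0$, which contradicts $L(f) \leq 0$. I expect the main obstacle to be the passage from the purely algebraic functional $L$ to a genuine measure: both the bounded extension to $C(B)$ and the localization $\mathrm{supp}(\mu) \subseteq \mathcal{K}$ hinge on carefully leveraging the Archimedean condition, whereas the separation step and the final positivity step are comparatively routine.
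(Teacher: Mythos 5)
The paper offers no proof of this statement: Putinar's Positivstellensatz is cited as a known result from the literature, so there is nothing in-paper to compare against. Your outline is the classical functional-analytic proof (separation of $f$ from the quadratic module, Archimedean boundedness of the separating functional, a representing measure supported in $\mathcal{K}$, and a strict-positivity contradiction), which is the right family of arguments; the induction establishing $N_g \pm g \in \qm(\bm{p})$ via $4ab=(a+b)^2-(a-b)^2$, the support localization via $\int p_i g^2\,d\mu\ge 0$, and the final step are all sound.

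However, the step that produces the measure has a genuine gap. You extend $L$ to $C(B)$ using the claimed bound $N_g \le C\sup_B|g|$, but from $N_g \pm g\in \qm(\bm{p})$ and \cref{lem:qm} one only obtains the \emph{reverse} inequality $\sup_{\mathcal{K}}|g|\le N_g$ — which is exactly what your parenthetical justification ("polynomials bounded in $\qm(\bm{p})$ are bounded on $B$") establishes. Bounding the smallest admissible $N_g$ from above by $\sup_B|g|$ is essentially the theorem itself applied to the polynomials $c\pm g$, so the step is circular; concretely, your multiplicative argument only yields $N_g\le\sum_{\bm{\alpha}}|c_{\bm{\alpha}}|N^{\|\bm{\alpha}\|_1/2}$ in terms of the coefficients of $g$, and this coefficient norm is not equivalent to $\sup_B|\cdot|$ on the infinite-dimensional space $\mathbb{R}[\bm{x}]$ (Chebyshev-type polynomials have small sup norm and large coefficients), so sup-norm continuity and the Stone--Weierstrass extension do not follow. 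The standard repair is to replace this step with the GNS construction: define $\langle p,q\rangle = L(pq)$, use the Archimedean condition to show the multiplication operators $M_{x_i}$ are bounded and self-adjoint on the Hilbert-space completion, and obtain the representing measure from the spectral theorem (Putinar's original route); alternatively, the measure can be avoided entirely via Jacobi's representation theorem (Zorn's lemma on quadratic modules maximal subject to excluding $f$). A smaller ordering issue: the Eidelheit--Kakutani separation of $f$ from $\qm(\bm{p})$ in the untopologized space $\mathbb{R}[\bm{x}]$ requires the cone to have an algebraic interior point, which is exactly what your Archimedean lemma provides (it shows $1$ is such a point), so that lemma must precede the separation step rather than follow it.
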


\begin{remark}\label{remark:homo} 
When the Archimedean condition is violated (e.g., $\mathcal K$ is unbounded), we need to resort to 
variants of~\cref{thm:putinar}, such as the homogenization formulation~\cite{huang23mp} and Putinar–Vasilescu Positivstellensatz~\cite{mai22mp}. 
\revise{
For example, the homogenization formulation has been applied in invariant generation for continuous-time dynamical systems~\cite{wu24fm-bc} and Craig interpolation synthesis~\cite{wu24fm-craig}, a logic inference technique  very related to invariant generation.
}
While these extensions allow handling unbounded domains, they come at the cost of increased complexity in the resulting constraints.
Nevertheless, these constraints can still be efficiently solved (as SDP, introduced later). 
To maintain focus on the core technique presented in this paper, we will limit our study to the scenario where program variables are bounded. 
Moreover, our algorithms remain sound even when this condition is violated.
\end{remark}

\revise{
\paragraph{SOS Relaxations}
We demonstrate the use of \cref{thm:putinar} on a typical template-based synthesis problem.
Let $l(\bm{a},\bm{x})$ be a parameterized polynomial that is linear in parameters $\bm{a}$.
For example, one can consider $l(\bm{a},\bm{x})$ to be a polynomial template of degree $d$ in variable $\bm{x}$ and the parameters $\bm{a}$ represent the unknown coefficients, i.e., 
\begin{equation}\label{eq:template}
    l(\bm{a},\bm{x})=\sum_{\|\bm{\beta}\|_1\le d} \bm{a}_{\bm{\beta}} \bm{x}^{\bm{\beta}}.
\end{equation}
where $\bm{\beta} = (\beta_1,\dots,\beta_{n})\in \Nat^{n}$ are exponents such that $\|\bm{\beta}\|_1=\sum_{j=1}^{n} \beta_j \le d$.
For example, when $\bm{x}=(x_1,x_2)$, a template could be $l(\bm{a},\bm{x})=a_1 x_1^2+a_2 x_1x_2+a_3 x_2^2+a_4x_1+a_5x_2+a_6$, which represents any polynomial in $(x_1,x_2)$ of degree not exceeding $2$.

Suppose we want to find a valid parameter assignment $\bm{a}_0$ such that $l(\bm{a}_0,\bm{x})\le 0$ over the compact basic closed semialgebraic set $\mathcal K$ as defined in \cref{eq:K},
the problem can be formalized as follows: 
\begin{equation}\label{eq:pop}
\begin{aligned}
    \text{find} \quad& \bm{a} \\
    s.t. \quad& \forall \bm{x}. \bigwedge_{i=1}^m p_i(\bm{x})\ge 0\implies l(\bm{a},\bm{x}) \le 0.
\end{aligned}
\end{equation}

Since $\mathcal K$ is bounded, we can assume that the quadratic module $\qm(\bm{p})$ is Archimedean.
By applying \cref{thm:putinar},
one can construct the following program:
\begin{equation}\label{eq:sos}
\begin{aligned}
    \inf \quad& \gamma \\
    \textit{s.t.} \quad& \gamma - l(\bm{a},\bm{x}) = \sigma_0(\bm{x}) +\sum_{i=1}^m \sigma_i(\bm{x}) \cdot p_i(\bm{x}),\\
    &\quad  
    \sigma_0 \in \Sigma[\bm{x}], 
    \sigma_i \in \Sigma[\bm{x}], \quad \text{for } i= 1,\dots,m,
\end{aligned}
\end{equation}
where $\gamma$ is a newly introduced variable serving as an upper bound of $l(\bm{a},\bm{x})$ over $\mathcal{K}$.
Let $\gamma^*$ denote the optimal value of Prog.~(\ref{eq:sos}).
The relation between Prog.~(\ref{eq:pop}) and Prog.~(\ref{eq:sos}) is stated by the following two theorems.

\begin{theorem}[Soundness]\label{thm:pre-sound}
If $\gamma^*< 0$ or the infimum $\gamma^*=0$ is attainable in Prog.~(\ref{eq:sos}), then Prog.~(\ref{eq:pop}) is feasible.
\end{theorem}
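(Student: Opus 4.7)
The plan is to reduce everything to \cref{lem:qm}, which already says that any element of $\qm(\bm{p})$ is non-negative on $\mathcal{K}$. Concretely, I would show that a feasible tuple $(\gamma, \bm{a}, \sigma_0, \dots, \sigma_m)$ of Prog.~(\ref{eq:sos}) with $\gamma \le 0$ immediately yields a witness $\bm{a}$ for Prog.~(\ref{eq:pop}). The soundness statement then follows by considering the two cases in the hypothesis.

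First I would handle the easier case $\gamma^* < 0$. By definition of infimum, there exists a feasible tuple $(\gamma, \bm{a}, \sigma_0, \dots, \sigma_m)$ of Prog.~(\ref{eq:sos}) with $\gamma < 0$ (one can, e.g., take $\gamma \in (\gamma^*, 0)$). The equality constraint gives $\gamma - l(\bm{a}, \bm{x}) = \sigma_0(\bm{x}) + \sum_{i=1}^m \sigma_i(\bm{x}) p_i(\bm{x}) \in \qm(\bm{p})$. Applying \cref{lem:qm}, the polynomial $\gamma - l(\bm{a}, \bm{x})$ is non-negative on $\mathcal{K}$, so $l(\bm{a}, \bm{x}) \le \gamma < 0$ for every $\bm{x} \in \mathcal{K}$. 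In particular $l(\bm{a}, \bm{x}) \le 0$ on $\mathcal{K}$, which is exactly the implication asserted by Prog.~(\ref{eq:pop}), so this $\bm{a}$ is feasible.

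Next I would handle the case where the infimum $\gamma^* = 0$ is attained. Here there exists an \emph{actual} feasible tuple with $\gamma = 0$, so $-l(\bm{a}, \bm{x}) \in \qm(\bm{p})$ for the corresponding $\bm{a}$. Again by \cref{lem:qm}, $-l(\bm{a}, \bm{x}) \ge 0$ on $\mathcal{K}$, i.e., $l(\bm{a}, \bm{x}) \le 0$ on $\mathcal{K}$, so this $\bm{a}$ witnesses the feasibility of Prog.~(\ref{eq:pop}).

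No step is really a genuine obstacle since the theorem is a direct consequence of the semantic meaning of the quadratic-module membership. The only subtle point worth flagging is why the two hypotheses were separated: when $\gamma^* = 0$ is \emph{not} attained, one only obtains a sequence of feasible $(\gamma_k, \bm{a}_k)$ with $\gamma_k \downarrow 0$ but $\gamma_k > 0$, and without additional compactness of the parameter set one cannot extract a limit $\bm{a}^*$ with $l(\bm{a}^*, \bm{x}) \le 0$ on $\mathcal{K}$; this is precisely why the statement requires either strict negativity or attainability at zero.
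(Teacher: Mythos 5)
Your proposal is correct and follows essentially the same route as the paper: extract a feasible tuple with $\gamma_0 \le 0$ (by case analysis on the two hypotheses), then apply \cref{lem:qm} to conclude $l(\bm{a}_0,\bm{x}) \le \gamma_0 \le 0$ on $\mathcal{K}$. The only difference is that you spell out the two cases and the role of attainability explicitly, which the paper's one-line proof leaves implicit.
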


\begin{proof}
Using the assumption, there must exist a feasible solution $\gamma_0\le 0$ and $\bm{a}_0$ such that the constraint in Prog.~(\ref{eq:sos}) holds. 
By \cref{lem:qm}, we have $\gamma_0-l(\bm{a}_0,\bm{x})\ge 0$ when $\bm{x}\in\mathcal{K}$.
Therefore, $\bm{a}_0$ is a solution to Prog.~(\ref{eq:pop}).
\end{proof}

\begin{theorem}[Semi-completeness]\label{thm:pre-complete}
If Prog.~(\ref{eq:pop}) is feasible,
then $\gamma^*\le 0$ in Prog.~(\ref{eq:sos}).
\end{theorem}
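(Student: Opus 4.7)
The plan is to reduce the statement to a direct application of Putinar's Positivstellensatz (\cref{thm:putinar}) via a perturbation argument. Feasibility of Prog.~(\ref{eq:pop}) yields some $\bm{a}_0$ such that $-l(\bm{a}_0,\bm{x})\ge 0$ for every $\bm{x}\in\mathcal{K}$, but this is only non-negativity, while \cref{thm:putinar} demands strict positivity. So the first step would be to fix an arbitrary $\epsilon>0$ and consider the shifted polynomial $f_\epsilon(\bm{x}) := \epsilon - l(\bm{a}_0,\bm{x})$, which is strictly positive on $\mathcal{K}$.

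Next, I would invoke the Archimedean assumption on $\qm(\bm{p})$, which allows \cref{thm:putinar} to be applied to $f_\epsilon$. This yields SOS polynomials $\sigma_0,\sigma_1,\dots,\sigma_m\in\Sigma[\bm{x}]$ with
\begin{equation*}
    \epsilon - l(\bm{a}_0,\bm{x}) = \sigma_0(\bm{x}) + \sum_{i=1}^m \sigma_i(\bm{x})\,p_i(\bm{x}).
\end{equation*}
Hence the tuple $(\gamma,\bm{a},\sigma_0,\dots,\sigma_m)=(\epsilon,\bm{a}_0,\sigma_0,\dots,\sigma_m)$ is feasible for Prog.~(\ref{eq:sos}), so by definition of the infimum, $\gamma^*\le\epsilon$.

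Finally, since $\epsilon>0$ was chosen arbitrarily, letting $\epsilon\to 0^+$ yields $\gamma^*\le 0$, as desired. The only genuine subtlety (and the one place where the argument could fail if we were careless) is the gap between the non-negativity guaranteed by feasibility of Prog.~(\ref{eq:pop}) and the strict positivity required by \cref{thm:putinar}; the perturbation by $\epsilon$ exactly bridges this gap, and the price paid is that one obtains $\gamma^*\le 0$ rather than the sharper $\gamma^*<0$ or attainment at $0$. This asymmetry is precisely what explains why the soundness statement in \cref{thm:pre-sound} needs the extra hypothesis about attainability, and it is worth highlighting this in the final writeup.
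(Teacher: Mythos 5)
Your proof is correct and matches the paper's argument: the paper likewise fixes a feasible $\bm{a}_0$, observes that $\gamma - l(\bm{a}_0,\bm{x})>0$ on $\mathcal{K}$ for every $\gamma>0$, applies \cref{thm:putinar} to obtain a feasible point of Prog.~(\ref{eq:sos}) for each such $\gamma$, and concludes $\gamma^*\le 0$. Your $\epsilon$ plays exactly the role of the paper's $\gamma$, and your closing remark about why only $\gamma^*\le 0$ (rather than strict negativity or attainment) can be concluded is a correct and worthwhile observation.
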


\begin{proof}
If Prog.~(\ref{eq:pop}) is feasible, let $\bm{a}_0$ be one feasible solution, then we know $\gamma-l(\bm{a}_0,\bm{x})>0$ over $\bm{x}\in \mathcal{K}$ for any $\gamma>0$,. 
By \cref{thm:putinar}, there exist SOS polynomials $\sigma_i$ for $0\le i\le m$ such that the constraint in Prog.~(\ref{eq:sos}) holds for any $\gamma>0$. 
Hence, the infimum $\gamma^*\le 0$.
\end{proof}
}

In~\cite{parrilo2000}, Parrilo showed that Prog.~(\ref{eq:sos}) can be approximated by solving a series of its relaxations.
The idea is to impose restrictions on the highest degree of involved polynomials in the constraints.
Concretely, 
given a relaxation order $d_r\in \mathbb{N}$ with $2d_r\ge \max\{d, \deg(p_1),\dots, \deg(p_m)\}$, we set the degrees of the unknown SOS polynomials appropriately such that the maximum degree of polynomials involved in Prog.~(\ref{eq:sos}) equals $2d_r$.
The resulting program is referred to as the $d_r$-th SOS relaxation of Prog.~(\ref{eq:sos}):
\begin{equation}\label{eq:sos-relax}
    \begin{aligned}
    \min &\quad \gamma \\
    \textit{s.t.} & \quad \gamma -l(\bm{a},\bm{x}) = \sigma_0(\bm{x}) +\sum_{i=1}^m \sigma_i(\bm{x})\cdot p_i(\bm{x}),\\
    &\quad \sigma_0\in \Sigma^{2d_r}[\bm{x}],
    \sigma_i \in \Sigma^{2\lfloor \frac{2d_r-\deg(p_i)}{2} \rfloor}[\bm{x}], \quad \text{for } i=1,\dots,m,
    \end{aligned}
\end{equation}
where $\lfloor \cdot \rfloor$ returns the largest integer less than or equal to the argument.
As the relaxation order~$d_r$ increases and approaches infinity, the optimal value of Prog.~(\ref{eq:sos-relax}) converges to the optimal value of Prog.~(\ref{eq:sos})~\cite{Lasserre01}. 
In other words, the series of SOS relaxations yields progressively more accurate approximations of the original problem Prog.~(\ref{eq:pop}). 
One distinct advantage of Prog.~(\ref{eq:sos-relax}) is that it can be solved as an SDP.

\revise{
\begin{remark}\label{remark:inf}
Whether the infimum $\gamma^*$ is attainable in \cref{thm:pre-sound} depends on whether the solutions of the SOS relaxations Prog.~(\ref{eq:sos-relax}) converge in finitely many steps.
Hence, this is also referred to as the \emph{finite convergence property}.
\cite[Thm.~1.1]{nie2014optimality} shows that this property is decidable by checking that the constraints are not in some pathological forms, which generally holds.
In practice, since we only deal with SOS relaxations, we do not need to consider this problem and treat it as a technical assumption.
\end{remark}
}

\paragraph{SDP Translation}

Let $S_n$ denote the set of symmetric $n\times n$ real matrices.
A matrix $X\in S_n$ is \emph{positive semidefinite} if all its eigenvalues are nonnegative, denoted by $X\succeq \bm{0}$.

\begin{definition}[Standard Form SDP~{\cite[Sec.~4.6.2]{boyd2004convex}}]\label{def:sdp}
A standard form SDP has linear equality constraints and a matrix nonnegativity constraint on the variable $X\in S_n$:
\begin{equation}
    \begin{aligned}
        \min \quad &  \textbf{tr}(CX)\\
        s.t. \quad & \textbf{tr}(A_i X) = b_i, \quad i=1,\dots,k\\
        & X\succeq \bm{0}.
    \end{aligned}
\end{equation}
where $C,A_1,\dots, A_k\in S_n$ for some $k\in \mathbb{N}$, and $\textbf{tr}(\cdot)$ is the trace operator, i.e., $\textbf{tr}(CX)= \sum_{i,j=1}^n C_{ij}X_{ji}$.
\end{definition}

Let $\mm_d(\bm{x})$ be a column vector with all monomials in $\bm{x}\in \mathbb{R}^n$ of degree up to $d$.
For example, when $\bm{x} = (x_1,x_2)$, $\mm_2(\bm{x})=(1,x_1,x_2,x_1^2,x_1x_2,x_2^2)$.
Any polynomial $p(\bm{x})\in \mathbb{R}^{2d}[\bm{x}]$ can be represented by 
\begin{equation*}
    p(\bm{x}) = \mm_d^\top(\bm{x}) C_p \mm_d(\bm{x}),
\end{equation*}
where $\mm_d^\top(\bm{x})$ is the transpose of $\mm_d(\bm{x})$ and $C_p\in S_{\binom{n+d}{d}}$ is called the \emph{Gram matrix} of $p$. 
An important theorem states that a polynomial $p$ is an SOS polynomial if and only if its Gram matrix is positive semidefinite, i.e., $C_p \succeq \bm{0}$~\cite[Prop.~2.1]{lasserre2009book}. 

For each relaxation order $d_r\in \Nat$, Prog.~(\ref{eq:sos-relax}) can be translated into an equivalent SDP:
\begin{equation}\label{eq:sos-relax-sdp}
    \begin{aligned}
        \min \quad & \gamma\\
        s.t. \quad & C_{\gamma-l(\bm{a},\bm{x})} = C_{\sigma_0} + \sum_{i=1}^m C_{\sigma_i \cdot p_i}\\
        & \text{diag}(C_{\sigma_0}, C_{\sigma_1},\dots,C_{\sigma_m})\succeq \bm{0}
    \end{aligned}
\end{equation}
where 
$C_{\sigma_i \cdot p_i} = C_{\sigma_i} \mm_{d_r}(\bm{x}) \mm^\top_{d_r}(\bm{x}) C_{p_i}$ for $i=1,\dots,m$,
and $\text{diag}(C_{\sigma_0}, C_{\sigma_1},\dots,C_{\sigma_m})$ is a block-diagonal matrix.
One can check that Prog.~(\ref{eq:sos-relax-sdp}) conforms to the standard form of SDP in \cref{def:sdp}.

Roughly speaking, the complexity for solving Prog.~(\ref{eq:sos-relax-sdp}) depends on the maximum size of Gram matrices $C_{\sigma_i}$ , which is at most $\binom{n+d_r}{n}\times \binom{n+d_r}{n}$~\cite{Roux16}. 
Note that $\binom{n+d_r}{n}$ is a polynomial in $n$ for a fixed $d_r$ and vice versa, but is not a polynomial in both $n$ and $d_r$.
Since an SDP can be solved in polynomial time, for example, by interior point methods~\cite{boyd2004convex}, the complexity for solving Prog.~(\ref{eq:sos-relax}) is polynomial in $\binom{n+d_r}{n}$.

\subsection{Problem Formulation}\label{sec:2-2}

\paragraph{Program Model}
In this paper, we focus on synthesizing invariants for loops of the form in Code~\ref{code:model}. 
In \cref{sec:extension-2}, we will discuss how to handle nested loops.

\begin{listing}[!ht]
\begin{minted}[mathescape, escapeinside=||]{c}
    // Program variables: $\bm{x}\in \mathbb{R}^n$
    // Precondition: $\mathit{Pre} = \{ \bm{x}\mid \bm{q}_{pre}(\bm{x}) \le 0\}$ 
    while (|$\bm{g}(\bm{x})\le 0$|) {
        case (|$\bm{c}_1(\bm{x}) \le 0$|) : |$\bm{x} \gets \bm{f}_1(\bm{x})$|;
        case (|$\bm{c}_2(\bm{x}) \le 0$|) : |$\bm{x} \gets \bm{f}_2(\bm{x})$|;
        |$\cdots$|
        case (|$\bm{c}_k(\bm{x}) \le 0$|) : |$\bm{x} \gets \bm{f}_k(\bm{x})$|;
    }
    // Postcondition: $\mathit{Post} = \{ \bm{x} \mid \bm{q}_{post}(\bm{x})\le 0\}$
\end{minted}
\caption{The Program Model}
\label{code:model}
\end{listing}

In our program model, program variables $\bm{x}\in \mathbb{R}^n$ are assumed to take real values.
The loop consists of a loop guard $\bm{g}(\bm{x})\le \bm{0}$ and a switch-case loop body, where each branch contains a branch conditional $\bm{c}_i(\bm{x})\le \bm{0}$ and an assignment statement $\bm{x}=\bm{f}_i(\bm{x})$ for $i=1,\dots,k$.
Here, we require that $\bm{g}(\bm{x}), \bm{c}_i(\bm{x}), \bm{f}_i(\bm{x})$ are all (vectors of) polynomials.
The branch conditionals $\bm{c}_i(\bm{x})\le \bm{0}$ are tested in parallel.
This means that if more than one branch conditionals are satisfied, the program will \emph{nondeterministically} choose a satisfied branch.

The goal is to prove the correctness of the program, 
i.e., for any state satisfying the precondition ($\bm{x}\in \mathit{Pre}$), if the loop terminates, the final state must satisfy the postcondition ($\bm{x}\in\mathit{Post}$). 
Here $\mathit{Pre}$ and $\mathit{Post}$ are basic semialgebraic sets defined by polynomial inequalities $\bm{q}_{pre}(\bm{x}) \le \bm{0}$ and $\bm{q}_{post}(\bm{x})\le \bm{0}$, respectively.

\revise{
We make one assumption in our model:
Throughout the execution of the program, the program state $\bm{x}$ remains within a known hyper-rectangle $C_{\bm{x}}\subseteq \mathbb R^n$.
In our algorithms, we consider $C_{\bm{x}}$ to be of the form $\{\bm{x}\in \mathbb R^n\mid x_1^2 - N^2 \le 0, \dots, x_n^2 - N^2\le 0\}=[-N,N]^n$, where $N\in \mathbb{N}$ is a constant.

This assumption corresponds to the Archimedean condition in \cref{thm:putinar}. 
In most cases, many real-world programs have natural bounds for program variables.
Additionally, in practical programming languages like \texttt{C}, variables are typically assigned types and have known value ranges.
Therefore, the assumption is often reasonable.
As discussed in Remark~\ref{remark:homo}, this assumption is not essential and can be removed if we use extensions of Putinar's Positivstellensatz.
Moreover, our algorithms remain sound even without this assumption, which is similar to other works based on Putinar's Positivstellensatz (\cref{thm:putinar})~\cite{AGM15SAS,chatterjee2020polynomial,GHM+23OOPSLA}.
}

\paragraph{Invariant Synthesis Problem}
The formal definition of loop invariants is formulated as follows:

\begin{definition}[Invariant]
\label{def:inv}
    $\mathit{Inv} \subseteq \mathbb{R}^n$ is an invariant of the program in Code~\ref{code:model} if it satisfies the following three conditions, also called the invariant conditions:
    \begin{align}
        \bm{x} \in \mathit{Pre} &\implies \bm{x} \in \mathit{Inv}, \tag{Initial Cond.}\\
        \bm{x} \in \mathit{Inv} \wedge \bm{g}(\bm{x})\le \bm{0} \wedge \bm{c}_i(\bm{x}) \leq \bm{0} &\implies 
            \bm{f}_i(\bm{x}) \in  \mathit{Inv}, \quad \text{ for }i=1, \dots, k \tag{Inductive Cond.}\\
        \bm{x} \in \mathit{Inv} \wedge \neg (\bm{g}(\bm{x})\le \bm{0}) &\implies \bm{x} \in \mathit{Post} \tag{Saturation Cond.}
    \end{align}    
\end{definition}

The existence of an invariant implies the correctness of the loop. 
However, directly searching for a satisfactory  $\mathit{Inv}$ within the entire space of all subsets of $\mathbb{R}^n$ could be challenging.
To address this issue, 
one common approach is to impose constraints on the invariants $\mathit{Inv}$ to adhere to specific types of parametric formulas.
For explanation, we primarily focus on polynomial templates, which are defined as follows.
The extension to basic semialgebraic templates (defined in \cref{sec:extension-3}) is straightforward.

\begin{definition}[Polynomial Template]
\label{def:template}
    \revise{
    A polynomial template is a polynomial $I(\bm{a}, \bm{x}) \in \mathbb{R}[\bm{a},\bm{x}]$ defined over $C_{\bm{a}} \times C_{\bm{x}}$,  
    where $C_{\bm{a}}\subseteq \mathbb{R}^{n'}$ is a hyper-rectangle and  
    $\bm{a} = (a_1, a_2, \dots, a_{n'}) \in C_{\bm{a}}$ 
    are referred to as parameters. 
    } 
    Given a parameter assignment $\bm{a}_0 \in C_{\bm{a}}$, 
    the instantiation of the invariant $\mathit{Inv}$ w.r.t. $\bm{a}_0$ is 
    the set ${\{ \bm{x}\in C_{\bm{x}} \mid I(\bm{a}_0, \bm{x}) \leq 0 \}}$,
    where $C_{\bm{x}}=[-N,N]^n$ for some user-defined $N\in \mathbb N$.
\end{definition}

The reason for the assumption $\bm{a}\in C_{\bm{a}}$ is similar to that of $\bm{x}$.
However, when $I(\bm{a},\bm{x})$ is linear in~$\bm{a}$ as in \cref{eq:template}, we can take $C_{\bm{a}}$ to be $[-1, 1]^{n'}$ without loss of generality.
This is because the parameters $\bm{a}$ can be scaled by any positive constant without changing the invariant candidate they define.

When a polynomial template $I(\bm{a},\bm{x})$ is fixed, the invariant conditions can be expressed as constraints in first-order logic: 
\begin{align}
    &\forall \bm{x} \in C_{\bm{x}}.~\bm{q}_{pre}(\bm{x})\le \bm{0}\implies I(\bm{a},\bm{x})\le \bm{0},\label{eq:inv-pre}\\
    &\forall \bm{x} \in C_{\bm{x}}.~I(\bm{a},\bm{x})\le \bm{0}\wedge \bm{g}(\bm{x})\le \bm{0} \wedge \bm{c}_i(\bm{x})\le \bm{0} \implies I(\bm{a},\bm{f}_i(\bm{x}))\le \bm{0}, \quad i=1,\dots,k, \label{eq:inv-branch}\\
    &\forall \bm{x} \in C_{\bm{x}}.~I(\bm{a},\bm{x})\le \bm{0} \wedge \neg(\bm{g}(\bm{x})\le \bm{0}) \implies \bm{q}_{post}(\bm{x})\le \bm{0}.\label{eq:inv-post}
\end{align}

\begin{definition}[Valid and Valid Set]
\label{def:valid}
    Given a program as presented in Code~\ref{code:model} and a polynomial template $I(\bm{a},\bm{x})\in \mathbb{R}[\bm{a},\bm{x}]$, 
    a parameter assignment $\bm{a}_0 \in C_{\bm{a}}$ is valid if it satisfies constraints~(\ref{eq:inv-pre})-(\ref{eq:inv-post}), meaning that the set $\{ \bm{x} \mid I(\bm{a}_0,\bm{x}) \leq 0 \}$ is an invariant of the program. 
    The valid set, denoted by $R_I$, represents the collection of all valid parameter assignments for the polynomial template $I(\bm{a},\bm{x})$. 
\end{definition}

Given a polynomial template for Code.~\ref{code:model}, we are interested in two problems: 
\begin{enumerate}
    \item The \textbf{weak invariant synthesis problem} asks for an invariant satisfying the template, i.e., finding a valid parameter assignment $\bm{a}_0\in C_{\bm{a}}$.
    \item The \textbf{strong invariant synthesis problem}~\cite{chatterjee2020polynomial} asks for a characterization of all possible invariants satisfying the template, i.e., characterizing the valid set $R_I$.
\end{enumerate}

In the following remark, we briefly explain the difficulty in applying \cref{thm:putinar} to the above problems.

\begin{remark}\label{remark:bmi}
Let $I(\bm{a},\bm{x})$ be a polynomial template that is linear in the parameters $\bm{a}$.
For the invariant condition \cref{eq:inv-pre}, since the parameters $\bm{a}$ occur after the implication symbol, the SOS relaxations of \cref{eq:inv-pre} can still be translated to SDP, similar to Prog.~(\ref{eq:pop}).
However, \cref{eq:inv-branch} and \cref{eq:inv-post} cannot be handled in the same manner as the parameters $\bm{a}$ occur before the implication symbol.
In other words, solving \cref{eq:inv-branch} and \cref{eq:inv-post} requires one to solve a program of the following form:
\begin{equation}\label{eq:para-pop}
    \begin{aligned}
        \text{find} \quad & \bm{a}\\
        s.t. \quad &
        \forall \bm{x}. 
        \bigwedge_{i=1}^m p_i(\bm{a},\bm{x}) \ge 0\implies l(\bm{a},\bm{x}) \le 0,    
    \end{aligned}
\end{equation}
where $p_i(\bm{a},\bm{x}), l(\bm{a},\bm{x})\in \Real[\bm{a},\bm{x}]$.

\revise{
Obviously, Prog.~(\ref{eq:para-pop}) is a generalization of Prog.~(\ref{eq:pop}), allowing polynomials $p_i$ to contain unknown parameters $\bm{a}$.
Unfortunately, Prog.~(\ref{eq:para-pop}) is much harder to solve.
}
To see this, we apply \cref{thm:putinar} to transform it into a program involving SOS polynomials, assuming the Archimedean condition:
\begin{equation}\label{eq:para-sos}
\begin{aligned}
    \min &\quad \gamma \\
    \textit{s.t.} &\quad \gamma -l(\bm{a},\bm{x})= \sigma_0(\bm{x}) +\sum_{i=1}^m \sigma_i(\bm{x}) \cdot p_i(\bm{a},\bm{x}),\\
    &\quad \sigma_0 \in \Sigma[\bm{x}], \sigma_i \in \Sigma[\bm{x}], \quad \text{for } i=1,\dots, m,
\end{aligned}
\end{equation}
where the decision variables are parameters $\bm{a}$ and the unknown coefficients in SOS polynomials $\sigma_i$, for $i=0,\dots,m$.
Similarly, by restricting the highest degree of involved polynomials in constraints, we obtain a series of SOS relaxations of the above program.

\revise{
However, the resulting SOS relaxations of \cref{eq:para-sos} can not be translated into SDPs because the Gram matrices of the products $\sigma_i(\bm{x}) \cdot p_i(\bm{a},\bm{x})$ contain bilinear terms arising from the product of unknown coefficients in $\sigma_i$ and parameters $\bm{a}$.
}
These constraints, known as bilinear matrix inequalities (BMIs) in optimization theory, are incompatible with the linear matrix inequalities (LMIs) allowed in SDPs. 
As shown in~\cite{toker95bmi} and~\cite{blondel95}, solving general BMI optimization problems is \textbf{NP}-hard.
In~\cite{chatterjee2020polynomial} and~\cite{GHM+23OOPSLA}, the constraints are further encoded into quadratic programming by applying Cholesky decomposition~\cite{golub96book} to Gram matrices of SOS polynomials. However, solving non-convex quadratic programming is still \textbf{NP}-hard~\cite{sahni74}.
\end{remark}

\section{Synthesizing Strong Invariants From Polynomial Templates}
\label{sec:under-approximation}

In this section, we propose the \textbf{Cluster algorithm} to give an approximate solution to the strong invariant synthesis problem. 
To this end, we leverage Lasserre's technique from~\cite{lasserre15} to construct a series of SOS relaxations for under-approximating the valid set $R_I$.
By solving these relaxations as SDPs, we obtain a sequence of polynomials $h_{d}(\bm{a})$ for $d\in \mathbb N$ whose $0$-sublevel sets are subsets of $R_I$.
Furthermore, we show that these under-approximations possess desirable properties, including soundness, convergence, and semi-completeness.
Moreover, these under-approximations can be utilized to simplify the weak invariant synthesis problem, which reduces to finding a solution to $h_{d}(\bm{a})\le 0$ over $\bm{a}\in C_{\bm{a}}$.
 
\subsection{Lasserre's Technique}
\label{sec:3-1}
In this part, we demonstrate how to apply Lasserre's technique~\cite{lasserre15} to deal with Prog.~(\ref{eq:para-pop}).
Instead of attempting to find a valid assignment of $\bm{a}\in C_{\bm{a}}$, our objective is to under-approximate the set of valid assignments of $\bm{a}$, denoted by
\begin{equation} \label{eq:R-def}
    R = \set[\bigg]{ \bm{a}\in C_{\bm{a}} \given 
    \forall \bm{x}. \bigwedge_{i=1}^m p_i(\bm{a},\bm{x}) \ge 0\implies l(\bm{a},\bm{x}) \le 0},
\end{equation}
where $p_i(\bm{a},\bm{x}), l(\bm{a},\bm{x})\in \mathbb R[\bm{a},\bm{x}]$.
We assume that variables $\bm{x}$ and parameters $\bm{a}$ are both bounded within some predefined hyper-rectangles $C_{\bm{x}}$ and $C_{\bm{a}}$, respectively.
Without loss of generality, we can further assume that $p_i(\bm{a},\bm{x})$ include those polynomials that define $C_{\bm{x}}$ and $C_{\bm{a}}$, ensuring that $\qm (p_1,\dots,p_m)$ is Archimedean.

Let $\mathcal{K}_{\bm{a}} = \{ \bm{x}\in C_{\bm{x}}\mid \bigwedge_{i=1}^m p_i(\bm{a},\bm{x})\ge 0 \}$ be a basic closed semialgebraic set parameterized by $\bm{a}$.
When $\mathcal{K}_{\bm{a}}$ is non-empty for every $\bm{a}\in C_{\bm{a}}$,    
we can express $R$ as 
\begin{equation}\label{eq:R}
    R = \{ \bm{a}\in C_{\bm{a}} \mid J(\bm{a}) \le 0\},
\end{equation}
where $J(\bm{a}) = \sup_{\bm{x} \in \mathcal{K}_{\bm{a}}} l(\bm{a}, \bm{x})$.

Therefore, if we can find a function $h(\bm{a})$ such that $h(\bm{a})\ge J(\bm{a})$ for all $\bm{a}\in C_{\bm{a}}$, then the $0$-sublevel set of $h(\bm{a})$ serves as an under-approximation of $R$, i.e.,
\begin{equation}
    \{\bm{a}\in C_{\bm{a}} \mid h(\bm{a})\le 0\}
    \subseteq 
    \{\bm{a}\in C_{\bm{a}} \mid J(\bm{a})\le 0\} = R.
\end{equation}
Now the problem boils down to finding such a function $h(\bm{a})$. 
Ideally, we would like $h(\bm{a})$ to be a simple expression that closely approximates $J(\bm{a})$.
Fortunately, according to the following \cref{lem:upper} and \cref{thm:upper}, we can restrict our search to polynomials for $h(\bm{a})$.

\begin{lemma}~\cite[Lem.~1]{lasserre15}\label{lem:upper}
    The function $J$ as defined in \cref{eq:R} is upper semi-continuous, i.e., for all $\bm{a}_0 \in C_{\bm{a}}$, we have 
    \begin{equation}
        \limsup_{\bm{a} \to \bm{a}_0} J(\bm{a}) \leq J(\bm{a}_0).
    \end{equation}
\end{lemma}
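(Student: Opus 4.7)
The plan is to prove upper semi-continuity by a standard compactness argument exploiting the fact that $C_{\bm{x}}$ is a compact hyper-rectangle, the polynomials $p_i$ and $l$ are continuous, and $\mathcal{K}_{\bm{a}}$ is non-empty for every $\bm{a}\in C_{\bm{a}}$. First I would fix $\bm{a}_0\in C_{\bm{a}}$ and choose a sequence $\{\bm{a}_n\}\subseteq C_{\bm{a}}$ with $\bm{a}_n\to \bm{a}_0$ that realizes the limit superior, i.e., $\lim_{n\to\infty} J(\bm{a}_n) = \limsup_{\bm{a}\to\bm{a}_0} J(\bm{a})$. Since $\mathcal{K}_{\bm{a}_n}$ is a closed subset of the compact set $C_{\bm{x}}$ and the polynomial $l(\bm{a}_n,\cdot)$ is continuous, the supremum $J(\bm{a}_n)$ is attained at some $\bm{x}_n\in \mathcal{K}_{\bm{a}_n}$, so that $l(\bm{a}_n,\bm{x}_n)=J(\bm{a}_n)$.

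Next I would invoke compactness of $C_{\bm{x}}$ to extract a convergent subsequence $\bm{x}_{n_k}\to \bm{x}^\ast \in C_{\bm{x}}$. The key step is then to verify that $\bm{x}^\ast\in \mathcal{K}_{\bm{a}_0}$: since each $p_i$ is a polynomial (hence jointly continuous in $(\bm{a},\bm{x})$) and $p_i(\bm{a}_{n_k},\bm{x}_{n_k})\ge 0$ for all $k$, passing to the limit yields $p_i(\bm{a}_0,\bm{x}^\ast)\ge 0$ for every $i=1,\dots,m$, so $\bm{x}^\ast\in \mathcal{K}_{\bm{a}_0}$. Continuity of $l$ then gives
\begin{equation*}
\limsup_{\bm{a}\to \bm{a}_0} J(\bm{a}) = \lim_{k\to\infty} l(\bm{a}_{n_k},\bm{x}_{n_k}) = l(\bm{a}_0,\bm{x}^\ast) \le \sup_{\bm{x}\in \mathcal{K}_{\bm{a}_0}} l(\bm{a}_0,\bm{x}) = J(\bm{a}_0),
\end{equation*}
which is exactly the desired inequality.

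I expect no serious obstacle, as the argument is the standard Berge-type maximum theorem reasoning adapted to the upper-semi-continuity direction only. The only subtle point is ensuring $\bm{x}^\ast$ lies in $\mathcal{K}_{\bm{a}_0}$, which is where joint continuity of the defining polynomials $p_i(\bm{a},\bm{x})$ in both arguments (not merely in $\bm{x}$ for fixed $\bm{a}$) is used. Note that the lower-semi-continuity direction can fail in general because the feasible set $\mathcal{K}_{\bm{a}}$ may shrink abruptly when an inequality becomes tight, which is why only upper semi-continuity is claimed and why a polynomial upper envelope $h(\bm{a})\ge J(\bm{a})$ will be the natural object to search for in the subsequent development.
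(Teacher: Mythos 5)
Your argument is correct: the paper gives no proof of its own here (it simply cites \cite[Lem.~1]{lasserre15}), and the cited proof is essentially the same closed-graph/compactness argument you give --- attain the supremum on the compact nonempty set $\mathcal{K}_{\bm{a}_n}$, pass to a convergent subsequence, and use joint continuity of the $p_i$ to place the limit point in $\mathcal{K}_{\bm{a}_0}$. You also correctly flag the two hypotheses that make it work (compactness of $C_{\bm{x}}$ and nonemptiness of every $\mathcal{K}_{\bm{a}}$, both assumed in \cref{sec:3-1}), so nothing is missing.
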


\begin{theorem}~\cite[Thm.~1]{lasserre15} \label{thm:upper}
    Let $C_{\bm{a}}\subset \Real^{n'}$ be a compact set and $J(\bm{a}): C_{\bm{a}}\to \Real$ be a bounded and upper semi-continuous function. 
    Then there exists a sequence of polynomials $\{h_i(\bm{a})\mid i\in \Nat \}\subset \Real[\bm{a}]$ such that $h_i(\bm{a}) \ge J(\bm{a})$ over $\bm{a}\in C_{\bm{a}}$ for all $i\in\Nat$ and 
    \begin{equation}
        \lim_{i\to \infty} \int_{C_{\bm{a}}} |h_i(\bm{a})-J(\bm{a})| \mathrm{d} \bm{a} = 0.
    \end{equation}
\end{theorem}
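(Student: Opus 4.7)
The plan is to approximate the bounded upper semi-continuous function $J$ from above by a sequence of continuous functions, and then approximate each continuous function by a polynomial (plus a small safety margin) using the Weierstrass approximation theorem, and finally control the $L^1$-error by the triangle inequality.

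First, I would use the standard fact that any bounded upper semi-continuous function on a compact metric space is the pointwise decreasing limit of a sequence of continuous functions. Concretely, for each $k\in\Nat$ I would define the \emph{Pasch--Hausdorff envelope}
\begin{equation*}
    g_k(\bm{a}) \;=\; \sup_{\bm{b}\in C_{\bm{a}}} \bigl\{ J(\bm{b}) - k\,\|\bm{a}-\bm{b}\|_2 \bigr\}.
\end{equation*}
Since $J$ is bounded, each $g_k$ is well-defined, $k$-Lipschitz (hence continuous) on $C_{\bm{a}}$, and one verifies $g_k \ge J$ everywhere. Upper semi-continuity of $J$ gives $g_k(\bm{a}) \downarrow J(\bm{a})$ pointwise as $k\to\infty$. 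Combined with the uniform bound $\sup_k \|g_k\|_\infty \le \|J\|_\infty < \infty$ and the finite measure of the compact set $C_{\bm{a}}$, the dominated convergence theorem yields
\begin{equation*}
    \lim_{k\to \infty}\int_{C_{\bm{a}}} (g_k(\bm{a}) - J(\bm{a}))\,\mathrm{d}\bm{a} \;=\; 0.
\end{equation*}

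Second, for each fixed $k$, since $g_k$ is continuous on the compact set $C_{\bm{a}}\subset\Real^{n'}$, the Weierstrass approximation theorem supplies a polynomial $p_k\in\Real[\bm{a}]$ with $\sup_{\bm{a}\in C_{\bm{a}}} |p_k(\bm{a}) - g_k(\bm{a})| \le \epsilon_k$, where I am free to choose $\epsilon_k\to 0$ (e.g.\ $\epsilon_k = 1/k$). I would then define
\begin{equation*}
    h_k(\bm{a}) \;=\; p_k(\bm{a}) + \epsilon_k,
\end{equation*}
which is a polynomial satisfying $h_k(\bm{a}) \ge g_k(\bm{a}) \ge J(\bm{a})$ on $C_{\bm{a}}$, exactly the desired upper-bound property.

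Finally, I would estimate the $L^1$-error using the triangle inequality and the fact that $h_k - J \ge 0$:
\begin{equation*}
    \int_{C_{\bm{a}}} |h_k - J|\,\mathrm{d}\bm{a}
    \;\le\; \int_{C_{\bm{a}}} (g_k - J)\,\mathrm{d}\bm{a} \;+\; 2\epsilon_k\,\mu(C_{\bm{a}}),
\end{equation*}
where both terms tend to $0$ as $k\to\infty$. Taking the subsequence indexed by $i$ yields the desired $\{h_i\}_{i\in\Nat}$. The main obstacle, though technically standard, is the construction and verification of the decreasing envelope $g_k \downarrow J$: one must carefully use upper semi-continuity to rule out a strictly positive gap, and use boundedness of $J$ to ensure the Pasch--Hausdorff supremum is finite and Lipschitz. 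Everything else reduces to classical approximation theory and the dominated convergence theorem.
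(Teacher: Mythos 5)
The paper does not prove this statement itself; it is quoted verbatim from \cite[Thm.~1]{lasserre15}, so there is no in-paper proof to compare against. Your argument is correct and is essentially the standard proof of that cited result: you realize the bounded upper semi-continuous $J$ as the decreasing pointwise limit of continuous functions (via the Pasch--Hausdorff/Lipschitz envelope $g_k$), pass to the limit in $L^1$ by dominated (or monotone) convergence, and then push each $g_k$ up to a polynomial majorant with Stone--Weierstrass plus an $\epsilon_k$ margin; the final triangle-inequality estimate is also right, since $h_k - g_k \le 2\epsilon_k$ uniformly on $C_{\bm{a}}$. The only step that deserves the care you flagged is the verification that $g_k \downarrow J$ pointwise, which indeed uses both boundedness (to force the near-optimizers $\bm{b}_k \to \bm{a}$) and upper semi-continuity (to rule out a positive gap), and your sketch handles this correctly.
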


In what follows, we show how to compute such a polynomial $h(\bm{a})$ by employing SOS relaxations.
First, using the definition of $J(\bm{a})$, we have $h(\bm{a}) - J(\bm{a}) \ge 0$ over $\bm{a}\in C_{\bm{a}}$ if and only if 
\begin{equation}
    \forall (\bm{a}, \bm{x}) \in C_{\bm{a}}\times \mathcal{K}_{\bm{a}}. ~h(\bm{a}) - l(\bm{a},\bm{x}) \ge 0.
\end{equation}
Thus, the problem amounts to solving the following program:
\begin{equation}\label{eq:qsos-prog}
\begin{aligned}
    \inf \quad & \frac{1}{\mu(C_{\bm{a}})}\int_{C_{\bm{a}}} h(\bm{a}) \mathrm{d} \bm{a}\\
    s.t. \quad & \forall (\bm{a}, \bm{x}).~\bigwedge_{i=1}^m p_i(\bm{a},\bm{x}) \ge 0 \implies h(\bm{a}) - l(\bm{a},\bm{x})\ge 0,
\end{aligned}    
\end{equation}
where $\mu(C_{\bm{a}})$ is the volume of $C_{\bm{a}}$ and the objective function is the scaled integral of $h(\bm{a})$ over $C_{a}$.
Since $J(\bm{a})$ is fixed, minimizing $\int_{C_{\bm{a}}} h(\bm{a}) \mathrm{d} \bm{a}$ is the same as minimizing $\int_{C_{\bm{a}}} |h(\bm{a}) - J(\bm{a})| \mathrm{d} \bm{a}$, i.e., the gap between $h(\bm{a})$ and $J(\bm{a})$ over $C_{\bm{a}}$.

The main difference between Prog.~(\ref{eq:qsos-prog}) and Prog.~(\ref{eq:para-sos}) is how parameters $\bm{a}$ are quantified in constraints:
In Prog.~(\ref{eq:para-sos}), $\bm{a}$ are associated with an (implicit) existential quantifier, while in Prog.~(\ref{eq:qsos-prog}) a universal quantifier. 
As a result, for Prog.~(\ref{eq:qsos-prog}), we can treat parameters~$\bm{a}$ equally as variables $\bm{x}$.
After applying \cref{thm:putinar}, we have:
\begin{equation}\label{eq:qsos}
\begin{aligned}
    \inf \quad & \frac{1}{\mu(C_{\bm{a}})}\int_{C_{\bm{a}}} h(\bm{a}) \mathrm{d} \bm{a} \\
    s.t. \quad & h(\bm{a}) - l(\bm{a},\bm{x}) = \sigma_0(\bm{a},\bm{x}) +  \sum_{i=1}^m \sigma_i(\bm{a},\bm{x}) \cdot p_i(\bm{a},\bm{x}),\\
    &\sigma_0(\bm{a},\bm{x}), \sigma_i(\bm{a},\bm{x})\in \Sigma[\bm{a},\bm{x}], \text{ for }i=1,\dots,m
\end{aligned}    
\end{equation}
It is worthwhile to note that SOS polynomials $\sigma_i(\bm{a},\bm{x})$ belong to $\Sigma[\bm{a},\bm{x}]$, while SOS polynomials $\sigma_i(\bm{x})\in \Sigma[\bm{x}]$ in Prog.~(\ref{eq:para-sos}).

When $h(\bm{a})$ is of degree $d$, it can be expressed as 
$h(\bm{a}) = \sum_{\bm{\beta}} \mathrm{h}_{\bm{\beta}} \bm{a}^{\bm{\beta}} $ where $\bm{\beta} = (\beta_1,\dots,\beta_{n'})\in \Nat^{n'}$ are exponents such that $\sum_{j=1}^{n'}\beta_j \le d$ and $\mathrm{h}_{\bm{\beta}}$ are unknown coefficients in $\mathbb R$.
Then, the objective function in Prog.~(\ref{eq:qsos}) can be simplified into a linear expression in coefficients $h_{\bm{\beta}}$:
\begin{equation}
    \begin{aligned}
        \frac{1}{\mu(C_{\bm{a}})} \int_{C_{\bm{a}}} h(\bm{a}) \mathrm{d} \bm{a} 
        &= \frac{1}{\mu(C_{\bm{a}})} \int_{C_{\bm{a}}} \bigg( \sum_{\bm{\beta}} \mathrm{h}_{\bm{\beta}} \bm{a}^{\bm{\beta}} \bigg) \mathrm{d} \bm{a} \\
        &= \sum_{\bm{\beta}} \bigg(\frac{1}{\mu(C_{\bm{a}})} \int_{C_{\bm{a}}} \bm{a}^{\bm{\beta}} \mathrm{d} \bm{a} \bigg) \mathrm{h}_{\bm{\beta}}.
    \end{aligned}
\end{equation} 
When $C_{\bm{a}}$ is a hyper-rectangle (or other simple shapes like ellipses), the integral $\int_{C_{\bm{a}}} \bm{a}^{\bm{\beta}} \mathrm{d} \bm{a}$ will be easy to compute.

Now we present SOS relaxations of Prog.~(\ref{eq:qsos}). 
Assuming that $h(\bm{a})$ is of degree~$d$, 
let $d_r$ be the smallest natural number such that  $2d_r\ge \max\{d, \deg(l), \deg(p_1),\dots, \deg(p_m)\}$, then the $d_r$-th SOS relaxation of Prog.~(\ref{eq:qsos}) is given by 
\begin{equation}\label{eq:qsos-relax}
\begin{aligned}
        \min \quad &\sum_{\bm{\beta}} \gamma_{\bm{\beta}} \mathrm{h}_{\bm{\beta}} \\
    s.t. \quad 
    & h(\bm{a}) = \sum_{\bm{\beta}} \mathrm{h}_{\bm{\beta}} \bm{a}^{\bm{\beta}} \in \Real^d[\bm{a}],\\
    &h(\bm{a}) - l(\bm{a},\bm{x}) = \sigma_0(\bm{a},\bm{x}) +  \sum_{i=1}^m \sigma_i(\bm{a},\bm{x}) \cdot p_i(\bm{a},\bm{x}),\\
    &\sigma_0\in \Sigma^{2d_r}[\bm{x}],
    \sigma_i \in \Sigma^{2\lfloor \frac{2d_r-\deg(p_i)}{2} \rfloor}[\bm{x}], \quad \text{for } i=1,\dots,m,
\end{aligned}    
\end{equation}
where $\gamma_{\bm{\beta}} = \frac{1}{\mu(C_{\bm{a}})}\int_{C_{\bm{a}}} \bm{a}^{\bm{\beta}} \mathrm{d} \mu(\bm{a})$.
 
For each $d\in \Nat$, mirroring the process in \cref{sec:pre-sos}, Prog.~(\ref{eq:qsos-relax}) can be translated into an SDP.
\revise{
If this SDP is solvable, we can use the solutions $\{h_{\bm{\beta}}\}_{\bm{\beta}}$ to construct a polynomial $h_d(\bm{a})= \sum_{\bm{\beta}} \mathrm{h}_{\bm{\beta}} \bm{a}^{\bm{\beta}}$, which serves as the $d$th approximation of $h(\bm{a})$.
}
If the translated SDP is not solvable, we set $h_d(\bm{a})=1$.
In this case, $R_d=\{\bm{a}\in C_{\bm{a}} \mid 1\le 0\}=\emptyset$ is a trivial under-approximation of $R$.
Moreover, we have the following theorem:
\begin{theorem} 
\label{thm:lasserre15}
\cite[Thm.~5]{lasserre15}
    Assume that $R$ has nonempty interior and $\mathcal K_{\bm{a}}$ is non-empty for every $\bm{a}\in C_{\bm{a}}$,
    then $h_{d}(\bm{a})$ converges to $J(\bm{a})$ (from above) as $d$ goes to $\infty$, i.e.,
    \begin{equation*}
        \lim_{d\to \infty} \int_{C_{\bm{a}}} |h_{d}(\bm{a})-J(\bm{a})| \mathrm{d} \bm{a} = 0.
    \end{equation*}
\end{theorem}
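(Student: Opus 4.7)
The plan is to combine three ingredients already developed in the excerpt: the \emph{soundness} of the SOS relaxation, the polynomial $L^1$-density result of \cref{thm:upper}, and Putinar's Positivstellensatz (\cref{thm:putinar}) to convert density into SDP feasibility. I would split the argument into showing (i) $h_d(\bm{a}) \ge J(\bm{a})$ pointwise on $C_{\bm{a}}$ for every $d$, and (ii) $\int_{C_{\bm{a}}} h_d\, \mathrm{d}\bm{a} \to \int_{C_{\bm{a}}} J\, \mathrm{d}\bm{a}$. Since (i) forces $|h_d - J| = h_d - J$, the two together yield the stated $L^1$-convergence.

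For (i), any $h$ feasible for Prog.~(\ref{eq:qsos-relax}) satisfies $h(\bm{a}) - l(\bm{a},\bm{x}) \in \qm(\bm{p})$, hence by \cref{lem:qm} is non-negative on $\{(\bm{a},\bm{x}) \mid p_i(\bm{a},\bm{x})\ge 0\}$. Taking the supremum over $\bm{x}\in \mathcal{K}_{\bm{a}}$, non-empty by hypothesis, yields $h(\bm{a}) \ge J(\bm{a})$ on $C_{\bm{a}}$. For (ii), fix $\epsilon > 0$. By \cref{lem:upper}, $J$ is upper semi-continuous and bounded on the compact set $C_{\bm{a}}$, so \cref{thm:upper} supplies a polynomial $\tilde h \in \mathbb{R}[\bm{a}]$ with $\tilde h \ge J$ on $C_{\bm{a}}$ and $\int_{C_{\bm{a}}} (\tilde h - J)\, \mathrm{d}\bm{a} < \epsilon$. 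The polynomial $\tilde h$ is not yet feasible for Prog.~(\ref{eq:qsos-relax}) because Putinar's theorem requires \emph{strict} positivity, so I would perturb: for $\delta > 0$, $\tilde h_\delta := \tilde h + \delta$ satisfies $\tilde h_\delta(\bm{a}) - l(\bm{a},\bm{x}) \ge \delta > 0$ on the compact basic closed semialgebraic set $\{p_i \ge 0\}$, whose defining quadratic module is Archimedean. Then \cref{thm:putinar} yields an SOS certificate $\tilde h_\delta - l = \sigma_0 + \sum_i \sigma_i p_i$ with $\sigma_i \in \Sigma[\bm{a},\bm{x}]$ of some finite total degree $2 d_r^\star(\delta)$. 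Consequently, as soon as $d \ge \deg \tilde h$ and the associated relaxation order satisfies $d_r \ge d_r^\star(\delta)$, the polynomial $\tilde h_\delta$ is feasible, and the optimality of $h_d$ gives
\begin{equation*}
    \int_{C_{\bm{a}}} h_d\, \mathrm{d}\bm{a}
    \;\le\; \int_{C_{\bm{a}}} \tilde h\, \mathrm{d}\bm{a} + \delta\, \mu(C_{\bm{a}})
    \;\le\; \int_{C_{\bm{a}}} J\, \mathrm{d}\bm{a} + \epsilon + \delta\, \mu(C_{\bm{a}}).
\end{equation*}
Sending $d \to \infty$, then $\delta \to 0$ and $\epsilon \to 0$ closes the upper bound; combined with (i) this yields $\int_{C_{\bm{a}}} |h_d - J|\, \mathrm{d}\bm{a} \to 0$.

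The main obstacle is the perturbation step: the $L^1$-dense approximant $\tilde h$ produced by \cref{thm:upper} lies only weakly above $J$, whereas Putinar's theorem demands uniform strict positivity over the whole constraint set in $(\bm{a},\bm{x})$. The remedy $\tilde h \mapsto \tilde h + \delta$ purchases a uniform slack $\delta$ at the price of a controlled $\delta\, \mu(C_{\bm{a}})$ excess in the objective, which vanishes in the limit. The assumption that $R$ has non-empty interior enters precisely here: it rules out pathological situations in which no polynomial $\tilde h$ can $L^1$-approximate $J$ without blowing up, so that the density of \cref{thm:upper} interacts cleanly with the SOS hierarchy whose feasible points all satisfy $h \ge J$. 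The fact that the relaxation order $d_r$ grows with $d$ is built into the definition of Prog.~(\ref{eq:qsos-relax}), so no separate argument is required to reach any prescribed $d_r^\star(\delta)$.
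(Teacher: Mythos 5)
The paper does not prove this statement at all: it is imported verbatim from \cite[Thm.~5]{lasserre15}, so there is no in-paper proof to compare against. Your reconstruction is essentially the argument of the cited source and is correct: feasible points of Prog.~(\ref{eq:qsos-relax}) dominate $J$ pointwise by \cref{lem:qm} (so $|h_d-J|=h_d-J$), and the matching upper bound on $\limsup_d \int_{C_{\bm{a}}} h_d$ comes from the $L^1$-density result (\cref{thm:upper}) combined with \cref{thm:putinar} applied to the $\delta$-shifted approximant, using the fact that the relaxation order grows with $d$ so any fixed certificate degree $d_r^\star(\delta)$ is eventually admissible. Two minor caveats. First, the inequality $\int_{C_{\bm{a}}} h_d\,\mathrm{d}\bm{a} \le \int_{C_{\bm{a}}} \tilde h_\delta\,\mathrm{d}\bm{a}$ presumes the $d$-th SDP attains (or nearly attains) its infimum; since attainment is not automatic, one should argue with the optimal values or with $h_d$ chosen within $1/d$ of the infimum --- a standard fix that does not affect the conclusion. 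Second, your explanation of where the nonempty-interior hypothesis on $R$ enters is off: \cref{thm:upper} requires only that $J$ be bounded and upper semi-continuous on the compact set $C_{\bm{a}}$, and both follow from \cref{lem:upper} together with the hypothesis that $\mathcal{K}_{\bm{a}}\neq\emptyset$ for every $\bm{a}$ (which keeps $J$ finite on the compact joint constraint set); in Lasserre's development the interior condition is tied to solvability/attainment of the relaxations and to the companion set-convergence statement (\cref{thm:convergence}), not to the density step, and your argument in fact never uses it. Neither point undermines the proof.
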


\subsection{Cluster Algorithm}
In this part, we show how to apply Lasserre's technique to solve the strong invariant synthesis problem.

Similar to \cref{eq:R}, we first express the valid set $R_I$ as a $0$-sublevel set.
Mimicking the definition of $J$, 
let us define $J_{1}(\bm{a}), \ldots, J_{k+2}(\bm{a})$ as follows (recall that $k$ is the number of branches):
\begin{align}
    J_{1}(\bm{a}) =& \sup_{\bm{x}\in \mathcal{K}_{\bm{a},1}} I(\bm{a},\bm{x}),
    \text{ with } \notag\\
    & \quad \mathcal{K}_{\bm{a},1} = \{ \bm{x}\in C_{\bm{x}} \mid \bm{q}_{pre}(\bm{x})\le \bm{0} \}, \label{eq:J-pre}\\
    J_{i+1}(\bm{a}) =& \sup_{\bm{x}\in \mathcal{K}_{\bm{a},i+1}} I(\bm{a},\bm{f}_i(\bm{x})),  
    \text{ with } \notag \\
    & \quad \mathcal{K}_{\bm{a},i+1} = \{ \bm{x}\in C_{\bm{x}} \mid I(\bm{a},\bm{x})\le \bm{0}, \bm{g}(\bm{x})\le \bm{0}, \bm{c}_i(\bm{x})\le \bm{0} \}, \text{ for } i=1,\dots,k, \label{eq:J-branch}\\
    J_{k+2}(\bm{a}) =& \sup_{\bm{x}\in \mathcal{K}_{\bm{a},k+2}} \bm{q}_{post}(\bm{x}), \text{ with } \notag \\
    &\quad \mathcal{K}_{\bm{a},k+2} = \{ \bm{x}\in C_{\bm{x}} \mid I(\bm{a},\bm{x})\le \bm{0} \wedge \neg(\bm{g}(\bm{x})\le \bm{0}) \}. \label{eq:J-post}
\end{align}
It is straightforward to see that Eqs.~(\ref{eq:J-pre})-(\ref{eq:J-post}) correspond to Eqs.~(\ref{eq:inv-pre})-(\ref{eq:inv-post}), respectively.
Then, we define 
\begin{equation}\label{eq:J}
    J(\bm{a}) = \max \{J_1(\bm{a}),\cdots, J_{k+2}(\bm{a}), -1 \},
\end{equation}
where an addition constant $-1$ (or any other negative real number) is introduced to ensure $J(\bm{a})\neq -\infty$ in an extreme case when $\mathcal{K}_{\bm{a}, i}=\emptyset$ for $i=1,\dots,k+2$. 
By definition, the valid set $R_I$ is the $0$-sublevel set of function $J(\bm{a})$, i.e.,
\begin{equation} 
    R_I = \{ \bm{a} \in C_{\bm{a}} \mid J(\bm{a}) \leq 0 \}.
\end{equation}

\revise{
In order to obtain a close under-approximation of $R_I$, we try to find a polynomial $h(\bm{a})$ that approaches $J(\bm{a})$ over $C_{\bm{a}}$ from above.
According to Prog.~(\ref{eq:qsos-prog}), the problem is reduced to the following program:
\begin{equation} \label{eq:qsos-inv}
\begin{aligned}
    &\qquad \inf \quad \frac{1}{\mu(C_{\bm{a}})}\int_{C_{\bm{a}}} h(\bm{a}) \mathrm{d} \bm{a}\\
    &\qquad s.t. \quad \forall (\bm{a}, \bm{x})\in C_{\bm{a}}\times C_{\bm{x}}. \\
    &
    \begin{cases}
        \bm{q}_{pre}(\bm{x})\le \bm{0} \implies h(\bm{a}) - I(\bm{a},\bm{x})\ge 0,\\
        I(\bm{a},\bm{x})\le \bm{0}\wedge \bm{g}(\bm{x})\le \bm{0}\wedge \bm{c}_i(\bm{x})\le \bm{0} \implies h(\bm{a}) - I(\bm{a},\bm{f}_i(\bm{x}))\ge 0,\text{ for } i=1,\dots, k,\\
        I(\bm{a},\bm{x})\le \bm{0} \wedge \neg(\bm{g}(\bm{x})\le \bm{0}) \implies h(\bm{a}) - \bm{q}_{post}(\bm{x})\ge 0,\\
        h(\bm{a}) + 1 \ge 0. 
    \end{cases}
\end{aligned}    
\end{equation}
where the first three constraints correspond to Eqs.~(\ref{eq:J-pre})-(\ref{eq:J-post}) (or the invariant conditions), and the last constraint corresponds to the additional value $-1$ in \cref{eq:J}.
}

For simplicity, we assume that $\bm{q}_{pre}$, $\bm{q}_{post}$, $\bm{g},\bm{c_i}$ are polynomials (instead of vectors of polynomials) and use $q_{pre}$, $q_{post}$, $g$, and $c_i$ instead.
We also assume that $C_{\bm{x}}=\{\bm{x}\in \mathbb R^n\mid N - x_1^2 \ge 0, \dots, N - x_n^2 \ge 0\}$ and $C_{\bm{a}}=\{\bm{x}\in \mathbb R^{n'}\mid 1- a_1^2 \ge 0, \dots, 1 - a_{n'}^2 \ge 0\}$.
Let polynomial $h(\bm{a})$ be of degree $d$, we translate Prog.~(\ref{eq:qsos-inv}) into constraints with SOS polynomials: 
\begin{equation}\label{eq:qsos-inv-relax}
\begin{aligned} 
        \inf \quad &\sum_{\bm{\beta}} \gamma_{\bm{\beta}} \mathrm{h}_{\bm{\beta}}\\
        s.t.  \quad
        & h(\bm{a}) = \sum_{\bm{\beta}} \mathrm{h}_{\bm{\beta}} \bm{a}^{\bm{\beta}} \in \Real^d[\bm{a}],\\
        & h(\bm{a}) - I(\bm{a},\bm{x})  = \sigma_{0,0} - \sigma_{0,1} \cdot q_{pre}(\bm{a},\bm{x}) + \sum_{j=1}^n \sigma^{\bm{x}}_{0,j} \cdot (N-x_j^2) + \sum_{j=1}^{n'} \sigma^{\bm{a}}_{0,j} \cdot (1-a_j^2),\\
        & h(\bm{a}) - I(\bm{a},f_i(\bm{x})) 
        = \sigma_{i,0} - \sigma_{i,1}\cdot I(\bm{a},\bm{x})-  \sigma_{i,2} \cdot g(\bm{x}) - \sigma_{i,3} \cdot c_i(\bm{x})\\
        &\hspace*{10em} + \sum_{j=1}^n \sigma^{\bm{x}}_{i,j} \cdot (N-x_j^2) + \sum_{j=1}^{n'} \sigma^{\bm{a}}_{i,j} \cdot (1-a_j^2) ,\text{ for } i=1,\dots,k  \\
        & h(\bm{a}) - q_{post}(\bm{x}) 
        = \sigma_{k+1,0}  - \sigma_{k+1,1} \cdot I(\bm{a},\bm{x}) + \sigma_{k+1,2}\cdot g(\bm{x}) \\
        &\hspace*{10em} + \sum_{j=1}^n \sigma^{\bm{x}}_{k+1,j} \cdot (N-x_j^2) + \sum_{j=1}^{n'} \sigma^{\bm{a}}_{k+1,j} \cdot (1-a_j^2),\\
        &h(\bm{a}) + 1 
         = \sigma_{k+2,0}+\sum_{j=1}^n \sigma^{\bm{x}}_{k+2,j} \cdot (N-x_j^2) + \sum_{j=1}^{n'} \sigma^{\bm{a}}_{k+2,j}\cdot (1-a_j^2),\\
        & \sigma_{i,j}, \sigma^{\bm{x}}_{i,j}, \sigma_{i,j}^{\bm{a}}\in \Sigma[\bm{a},\bm{x}] \text{ for all pair $(i,j)$}, 
\end{aligned}
\end{equation}
where, according to the definition of $C_{\bm{a}}$,
\begin{equation}
    \gamma_{\bm{\beta}} = \frac{1}{2^{n'}}\int_{C_{\bm{a}}} \bm{a}^{\bm{\beta}} \mathrm{d} \bm{a}=
    \begin{cases}
        0  &\text{if $\beta_i$ is odd for some $i$},\\
        \prod_{i=1}^n (\beta_i+1)^{-1} & \text{otherwise.}
    \end{cases}
\end{equation}

\revise{
The Cluster algorithm takes as input a program of the form Code~\ref{code:model}, a polynomial template $I(\bm{a},\bm{x})\in \Real[\bm{a},\bm{x}]$, and an upper bound $D\in \mathbb{N}$ on the degree of $h(\bm{a})$.
For each degree $d$ such that $1\le d \le D$,} the algorithm tries to find a polynomial $h_d(\bm{a})$ of degree $d$ by solving the $d_r$th SOS relaxation of Prog.~(\ref{eq:qsos-inv-relax}), where $d_r$ is the smallest natural number such that $2d_r$ is larger than or equal to the maximum degree of polynomials occurring in Prog.~(\ref{eq:qsos-inv}).
If the program is solvable, an under-approximation of $R_I$ is given by
\begin{equation}
    R_{I,d}=\{\bm{a}\in C_{\bm{a}}\mid h_{d}(\bm{a})\le 0\}\subseteq R_I.   
\end{equation}
If not solvable, we set $h_d(\bm{a})=1$ and $R_{I,d}=\{\bm{a}\in C_{\bm{a}}\mid 1\le 0\}=\emptyset$.
\revise{
Finally, the algorithm outputs the sequence $\{R_{I,d}\mid 1\le d\le D\}$.
The pseudo code of the algorithm is presented in \cref{alg:cluster}. 
}

\revise{
The Cluster algorithm can also be adapted to solve the weak invariant synthesis problem.}
When $h_d(\bm{a})$ is obtained for some $d\in \Nat$, we know that any parameter assignment $\bm{a}\in R_{I,d}$ is valid. 
From this perspective, the polynomial $h_d(\bm{a})$ characterizes a \emph{cluster} of invariants of similar shapes. 
To synthesize a valid assignment $\bm{a}$ such that $h(\bm{a})\le 0$, we only need to solve the constraint $\exists \bm{a}\in C_{\bm{a}}.~h(\bm{a})\le 0$ \revise{(see the comment on line 13 in \cref{alg:cluster})}, which is usually simpler than the original invariant conditions Eqs.~(\ref{eq:inv-pre})-(\ref{eq:inv-post}) and can be tackled by many modern optimization tools or SMT solvers.

\revise{
\begin{remark}\label{remark:parameter}
    One minor (theoretical) advantage of the Cluster algorithm is that it can handle templates $I(\bm{a},\bm{x})$ nonlinear in parameters~$\bm{a}$, as we only assume that $I(\bm{a},\bm{x}) \in \Real[\bm{a},\bm{x}]$.
    This is not the case for many existing approaches~\cite{chatterjee2020polynomial,GHM+23OOPSLA,LWY+14FCS,AGM15SAS} (discussed in \cref{sec:related}),
    where the templates are required to contain linear parameters so that the constraints can be encoded into desired forms.
    However, since in practice parameters often represent unknown coefficients, we still mainly focus on the case when templates are linear in parameters.
\end{remark}
}

\IncMargin{1em}
\begin{algorithm2e}[t]
\SetKwInOut{Input}{Input}
\SetKwInOut{Output}{Output}
\ResetInOut{Output} 
\Input{A program $\mathcal{P}$ of the form Code~\ref{code:model}, a polynomial template $I(\bm{a},\bm{x})\in \Real[\bm{a},\bm{x}]$, and an upper bound $D\in \mathbb{N}$ on the degree of $h(\bm{a})$.}
\Output{A sequence of under-approximations $\{ R_{I,i} \mid 1\le i\le D\}$.}
\BlankLine
Construct Prog.~(\ref{eq:qsos-inv-relax}) using $\mathcal{P}$ and $I(\bm{a},\bm{x})$\;
$d \gets 1$\;
\While{
    $d \le D$
}{
$d_{\max} \gets$ the largest degree of polynomials in Prog.~(\ref{eq:qsos-inv-relax})\;
$d_r \gets \lfloor \frac{d_{\max}+1}{2}\rfloor$\;
\eIf{the $d_r$-th SOS relaxation of Prog.~(\ref{eq:qsos-inv-relax}) is solvable}{
$\{h_{\bm{\beta}}\}_{\bm{\beta}}\gets$ Solve the $d_r$-th SOS relaxation of Prog.~(\ref{eq:qsos-inv-relax})\;
$h_d(\bm{a}) \gets \sum_{\bm{\beta}} \mathrm{h}_{\bm{\beta}} \bm{a}^{\bm{\beta}}$\algocomment*[l]{construct $h_d(\bm{a})$ using coefficients}
}{
$h_d(\bm{a}) \gets 1 $\algocomment*[l]{constant polynomial}
}
$R_{I,d} \gets \{\bm{a}\in C_{\bm{a}}\mid h_d(\bm{a})\le 0\}$\;
\algocomment*[h]{a valid parameter assignment can be obtained by solving $h_d(\bm{a})\le 0$}\\
$d\gets d+1$\;
}
\KwRet{$R_{I,1}, \dots, R_{I,D}$}\algocomment*[l]{a sequence of under-approximations of $R_I$}
\caption{\revise{The Cluster Algorithm}}
\label{alg:cluster}
\end{algorithm2e}
\DecMargin{1em}

\subsection{Soundness, Convergence, and Semi-Completeness}

\revise{
Now we prove the output $\{R_{I,d}\mid 1\le d\le D \}$ of the Cluster algorithm has many desired properties.
}
\revise{
\begin{theorem}[Soundness]
\label{thm:soundness}
Given $D\in \mathbb N$, $R_{I,d}$ is an under-approximation of the valid set $R_I$, i.e., $R_{I,d} \subseteq R_I$, for any $d$ such that $1\le d\le D$. 
\end{theorem}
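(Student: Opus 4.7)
The plan is to reduce the set inclusion $R_{I,d} \subseteq R_I$ to the pointwise inequality $h_d(\bm{a}) \ge J(\bm{a})$ over $C_{\bm{a}}$, where $J$ is defined in \cref{eq:J}. Since by definition $R_I = \{\bm{a} \in C_{\bm{a}} \mid J(\bm{a}) \le 0\}$ and $R_{I,d} = \{\bm{a} \in C_{\bm{a}} \mid h_d(\bm{a}) \le 0\}$, once this pointwise bound is established the inclusion is immediate. First I would dispose of the trivial case: when the $d_r$-th SOS relaxation of Prog.~(\ref{eq:qsos-inv-relax}) is not solvable, the algorithm sets $h_d(\bm{a}) = 1$, so $R_{I,d} = \emptyset$, which is vacuously contained in $R_I$. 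It thus suffices to handle the case where the SDP is solvable and $h_d(\bm{a}) = \sum_{\bm{\beta}} \mathrm{h}_{\bm{\beta}} \bm{a}^{\bm{\beta}}$ is extracted from a feasible solution together with SOS multipliers $\sigma_{i,j}, \sigma_{i,j}^{\bm{x}}, \sigma_{i,j}^{\bm{a}}$ realising the five equality constraints of Prog.~(\ref{eq:qsos-inv-relax}).

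The main step is to apply \cref{lem:qm} to each of those five SOS identities, viewed over the joint space of parameters and program variables. The initial-condition identity exhibits $h_d(\bm{a}) - I(\bm{a},\bm{x})$ as an element of the quadratic module generated by $-q_{pre}(\bm{x})$, the box polynomials $\{N - x_j^2\}_j$, and $\{1 - a_j^2\}_j$; hence $h_d(\bm{a}) \ge I(\bm{a},\bm{x})$ for every $(\bm{a},\bm{x}) \in C_{\bm{a}} \times \mathcal{K}_{\bm{a},1}$, and taking the supremum over $\bm{x}$ yields $h_d(\bm{a}) \ge J_1(\bm{a})$ on $C_{\bm{a}}$. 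The analogous arguments applied to the branch identities, the saturation identity, and the final identity $h_d(\bm{a}) + 1 \in \qm(\cdot)$ respectively give $h_d(\bm{a}) \ge J_{i+1}(\bm{a})$ for $i = 1, \dots, k$, $h_d(\bm{a}) \ge J_{k+2}(\bm{a})$, and $h_d(\bm{a}) \ge -1$ on $C_{\bm{a}}$. Combining these via the max in \cref{eq:J} yields $h_d(\bm{a}) \ge J(\bm{a})$ throughout $C_{\bm{a}}$, and the desired inclusion $R_{I,d} \subseteq R_I$ follows.

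I do not expect a serious obstacle here: the argument is essentially a book-keeping exercise matching each SOS constraint to the corresponding component of $J$ via \cref{lem:qm}. The only point that deserves emphasis is that we invoke only the easy direction of Putinar's theorem (membership in the quadratic module implies non-negativity on the basic closed semialgebraic set), so the Archimedean hypothesis is \emph{not} needed for soundness; this matches the claim in the paragraph surrounding Code~\ref{code:model} that the algorithm remains sound even when the boundedness assumption on program variables is violated.
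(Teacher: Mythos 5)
Your proposal is correct and follows essentially the same route as the paper's proof: apply \cref{lem:qm} to each SOS identity in Prog.~(\ref{eq:qsos-inv-relax}) to recover the corresponding universally quantified constraint of Prog.~(\ref{eq:qsos-inv}), conclude $h_d(\bm{a}) \ge J(\bm{a})$ on $C_{\bm{a}}$, and dispose of the unsolvable case via $R_{I,d} = \emptyset$. Your explicit decomposition into the components $J_1,\dots,J_{k+2},-1$ and the closing observation that only the easy direction of Putinar is used (so the Archimedean condition is not needed for soundness) are slightly more detailed than the paper's write-up but substantively identical.
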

\begin{proof}
We first show that a feasible solution to Prog.~(\ref{eq:qsos-inv-relax}) is also a solution to Prog.~(\ref{eq:qsos-inv}).
This is achieved by applying \cref{lem:qm} to each constraint in Prog.~(\ref{eq:qsos-inv-relax}).
For example, assume that there exists a polynomial $h(\bm{a})$ such that the following constraint in Prog.~(\ref{eq:qsos-inv-relax}) holds:
\begin{equation*}
    h(\bm{a}) - I(\bm{a},\bm{x})  = \sigma_{0,0} - \sigma_{0,1} \cdot q_{pre}(\bm{a},\bm{x}) + \sum_{j=1}^n \sigma^{\bm{x}}_{0,j} \cdot (N-x_j^2) + \sum_{j=1}^{n'} \sigma^{\bm{a}}_{0,j} \cdot (1-a_j^2)
\end{equation*}
for some SOS polynomials $\sigma_{0,0},\dots, \sigma^{\bm{a}}_{0,j}\in \Sigma[\bm{a},\bm{x}]$. 
By applying \cref{lem:qm}, we have 
\begin{equation*}
    \bigwedge_{j=1}^n N-x^2_j\ge 0 \wedge \bigwedge_{j=1}^{n'} N-a^2_j\ge 0 \wedge q_{pre}(\bm{a},\bm{x}) \le 0 \implies  h(\bm{a}) - I(\bm{a},\bm{x})\ge 0,
\end{equation*}
which corresponds to the first constraint in Prog.~(\ref{eq:qsos-inv}), i.e., 
\begin{equation*}
    \forall \bm{x}\in C_{\bm{x}}, \bm{a}\in C_{\bm{a}}.~q_{pre}(\bm{a},\bm{x}) \le 0 \implies  h(\bm{a}) - I(\bm{a},\bm{x})\ge 0.
\end{equation*}

Therefore, if the $d_r$th SOS relaxation of Prog.~(\ref{eq:qsos-inv-relax}) is solvable, we have $h_d(\bm{a})\ge J(\bm{a})$ over $C_{\bm{a}}$, which implies $R_{I,d} \subseteq R_I$.
If not solvable, we have $R_{I,d} = \emptyset \subseteq R_I$.
\end{proof}
}

\begin{theorem}[Convergence]
    \label{thm:convergence}
    If the set 
    $\{ \bm{a} \in C_{\bm{a}} \mid J(\bm{a}) = 0 \}$
    has Lebesgue measure zero, 
    then we have
    \begin{equation*}
    \label{eqn:convergenceRd}
        \lim_{D \to \infty} \mu (R_I \setminus R_{I,D}) = 0,
    \end{equation*}
\end{theorem}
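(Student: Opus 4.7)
The plan is to combine the soundness (\cref{thm:soundness}, giving $h_D \ge J$ on $C_{\bm{a}}$ and hence $R_{I,D}\subseteq R_I$) with the $L^1$-convergence statement of \cref{thm:lasserre15}, via a standard Markov-type argument. First, I would rewrite the error set as
\begin{equation*}
R_I \setminus R_{I,D} \;=\; \set[\big]{\bm{a}\in C_{\bm{a}} \given J(\bm{a})\le 0 < h_D(\bm{a})},
\end{equation*}
and split it according to whether $J(\bm{a})=0$ or $J(\bm{a})<0$:
\begin{equation*}
R_I\setminus R_{I,D} \;\subseteq\; \set[\big]{\bm{a}\in C_{\bm{a}}\given J(\bm{a})=0} \;\cup\; \set[\big]{\bm{a}\in C_{\bm{a}} \given J(\bm{a})<0 < h_D(\bm{a})}.
\end{equation*}
The first piece has measure zero by hypothesis, so the task reduces to controlling the second piece, call it $A_D$.

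For any threshold $\epsilon>0$, I would further decompose
\begin{equation*}
A_D \;\subseteq\; B_\epsilon \;\cup\; \set[\big]{\bm{a}\in C_{\bm{a}} \given h_D(\bm{a})-J(\bm{a}) > \epsilon},
\qquad B_\epsilon := \set[\big]{\bm{a}\in C_{\bm{a}} \given -\epsilon < J(\bm{a})\le 0},
\end{equation*}
because on $A_D\setminus B_\epsilon$ we have $J(\bm{a})\le -\epsilon$ and $h_D(\bm{a})>0$, whence $h_D(\bm{a})-J(\bm{a})>\epsilon$. The second set is handled by Markov's inequality together with \cref{thm:lasserre15}:
\begin{equation*}
\mu\bigl(\set[\big]{h_D-J>\epsilon}\bigr) \;\le\; \frac{1}{\epsilon}\int_{C_{\bm{a}}} \abs{h_D(\bm{a})-J(\bm{a})}\,\mathrm{d}\bm{a} \;\xrightarrow[D\to\infty]{}\; 0.
\end{equation*}
For the set $B_\epsilon$, note that $B_\epsilon$ is decreasing in $\epsilon$ and $\bigcap_{\epsilon>0} B_\epsilon = \{J=0\}$, which has measure zero by hypothesis; since $C_{\bm{a}}$ has finite volume, continuity of $\mu$ from above gives $\mu(B_\epsilon)\to 0$ as $\epsilon\to 0$.

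Putting the pieces together, for every $\epsilon>0$,
\begin{equation*}
\limsup_{D\to\infty}\mu(R_I\setminus R_{I,D}) \;\le\; \mu(B_\epsilon),
\end{equation*}
and letting $\epsilon\to 0$ yields the claim. The main obstacle is largely administrative rather than conceptual: verifying that $J$ is measurable (immediate from \cref{lem:upper} since upper semi-continuous functions are Borel) and that the hypotheses of \cref{thm:lasserre15} used here, namely the Archimedean/compactness conditions ensuring $L^1$-convergence of $h_D$ to $J$, are in force in the invariant setting (they are, by the assumptions on $C_{\bm{x}}$, $C_{\bm{a}}$ and the standing bounded-domain assumption). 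The hypothesis $\mu(\{J=0\})=0$ is exactly what prevents a nontrivial boundary of $R_I$ from being missed by any polynomial strict under-approximation.
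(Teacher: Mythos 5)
Your proof is correct and is essentially the argument the paper relies on: the paper's proof of \cref{thm:convergence} simply defers to the proof of Theorem~3 in~\cite{lasserre15}, and your decomposition into $\{J=0\}$, the band $B_\epsilon$, and the Markov-controlled set $\{h_D-J>\epsilon\}$ is precisely a reconstruction of that cited argument from the $L^1$-convergence in \cref{thm:lasserre15}. The only loose end is your appeal to the hypotheses of \cref{thm:lasserre15} (nonempty interior of the valid set), which is not among the stated assumptions of \cref{thm:convergence}; it is harmless here because $\{J<0\}$ is open by upper semi-continuity of $J$, so under $\mu(\{J=0\})=0$ either $R_I$ has nonempty interior or $\mu(R_I)=0$ and the claim is trivial.
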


\begin{proof}
    Essentially the same as the proof of~\cite[Thm.~3]{lasserre15}.
\end{proof}

\revise{
The condition in \cref{thm:convergence} is to ensure that 
\begin{equation}\label{eq:J-assump}
\mu (\{ \bm{a} \in C_{\bm{a}} \mid J(\bm{a}) \le 0 \})= \mu (\{ \bm{a} \in C_{\bm{a}} \mid J(\bm{a}) < 0 \}),
\end{equation}
meaning that the infimum value of Prog.~(\ref{eq:qsos-inv-relax}) is not attainable at most over a region of measure~$0$.
This assumption is made similar to the assumption in \cref{thm:pre-sound}.
As mentioned in \cref{remark:inf}, this is just a technical assumption and usually holds in practice.
}

Before presenting the semi-completeness result, we introduce the following definition.
\begin{definition}[Robustness]\label{def:robust}
    A polynomial template $I(\bm{a},\bm{x})$ is said to be \emph{robust} (w.r.t. the program model Code.~\ref{code:model}) if there exists a valid parameter assignment $\bm{a}_0\in C_{\bm{a}}$ and a small constant $\epsilon>0$ such that any $\bm{a}$ satisfying $\|\bm{a}-\bm{a}_0\|_2<\epsilon$ is still valid. 
\end{definition}

\revise{
\begin{proposition}\label{prop:decide}
Checking the robustness of a polynomial template is decidable.
\end{proposition}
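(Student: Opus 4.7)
The plan is to reduce the robustness check to the decidability of the first-order theory of real closed fields and then invoke Tarski--Seidenberg.

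First, I claim that the valid set $R_I$ from \cref{def:valid} is a semialgebraic subset of $C_{\bm{a}}$. Each invariant condition in Eqs.~(\ref{eq:inv-pre})--(\ref{eq:inv-post}) has the form $\forall \bm{x} \in C_{\bm{x}}.\ \phi_i(\bm{a}, \bm{x})$, where $\phi_i$ is a Boolean combination of polynomial (in)equalities in $(\bm{a}, \bm{x})$. Applying quantifier elimination to eliminate $\bm{x}$ yields, for each $i$, an equivalent quantifier-free formula $\psi_i(\bm{a})$ whose atoms are polynomial (in)equalities in $\bm{a}$ only. Conjoining these with the defining inequalities of $C_{\bm{a}}$ produces a quantifier-free formula $\Psi(\bm{a})$ in the language of ordered fields such that $R_I = \{\bm{a} \mid \Psi(\bm{a})\}$.

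Second, I rewrite \cref{def:robust} directly as the closed first-order sentence
\begin{equation*}
\exists \bm{a}_0.\ \exists \epsilon.\ \bigl(\epsilon > 0\bigr) \wedge \Psi(\bm{a}_0) \wedge \forall \bm{a}.\bigl(\|\bm{a}-\bm{a}_0\|_2^2 < \epsilon^2 \implies \Psi(\bm{a})\bigr).
\end{equation*}
Since $\|\bm{a}-\bm{a}_0\|_2^2 = \sum_j (a_j - a_{0,j})^2$ is a polynomial in $(\bm{a},\bm{a}_0)$, every atom of this sentence is a polynomial (in)equality, so the whole sentence belongs to the first-order theory of real closed fields.

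Third, by Tarski--Seidenberg, this theory is decidable; concrete decision procedures such as cylindrical algebraic decomposition apply uniformly. Hence robustness of a polynomial template is decidable. The main non-routine step is the bookkeeping in the first step, establishing via quantifier elimination that $R_I$ has a quantifier-free semialgebraic description $\Psi(\bm{a})$; the second step is a direct syntactic translation of \cref{def:robust}, and the third step is a standard appeal to Tarski--Seidenberg. The argument yields decidability only and gives no reasonable complexity bound, but this is consistent with the proposition's role as a theoretical justification rather than a practical subroutine of our algorithms.
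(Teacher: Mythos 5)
Your proof is correct and follows essentially the same route as the paper: both reduce robustness to the truth of a first-order sentence over the reals and invoke Tarski's decidability result. The only difference is presentational—your intermediate quantifier-elimination step producing $\Psi(\bm{a})$ is unnecessary, since the paper simply leaves the $\forall\bm{x}$ quantifiers inside the conjunction $\varphi(\bm{a})$ of Eqs.~(\ref{eq:inv-pre})--(\ref{eq:inv-post}) and observes that the full nested-quantifier sentence is already in the decidable first-order theory of real closed fields.
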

\begin{proof}
By \cref{def:robust}, when the polynomial template is robust, the valid set $R_I$ contains an interior point.
Let $\varphi(\bm{a})$ denote the conjunction of formulas in Eqs.~(\ref{eq:inv-pre})-(\ref{eq:inv-post}). 
Then the problem reduces to checking whether the following first-order logic formula holds:
\begin{equation}
    \exists \epsilon > 0, \exists \bm{a}_0\in C_{\bm{a}}, \forall \bm{a}\in C_{\bm{a}}.~\|\bm{a}-\bm{a}_0\|_2<\epsilon \implies \varphi(\bm{a}),
\end{equation}
which is decidable due to Tarski's result~\cite{Tarski51}.
\end{proof}
}

\begin{theorem}[Semi-Completeness]
\label{thm:weakComplete}
    If the set $\{ \bm{a} \in C_{\bm{a}} \mid J(\bm{a}) = 0 \}$ has Lebesgue measure zero 
    and there exists a robust polynomial template $I(\bm{a}_0,\bm{x})$ for some $\bm{a}_0\in C_{\bm{a}}$,
    the Cluster algorithm will yield a \emph{non-empty} under-approximation $R_{I,D}\subseteq R_I$ for some $D\in \mathbb N$ large enough.
\end{theorem}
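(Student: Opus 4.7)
The plan is to deduce non-emptiness of $R_{I,D}$ for sufficiently large $D$ by combining the convergence result (\cref{thm:convergence}) with the fact that robustness forces the valid set $R_I$ to have strictly positive Lebesgue measure. The high-level strategy is simple: since soundness (\cref{thm:soundness}) gives $R_{I,D} \subseteq R_I$ for all $D$, it suffices to establish $\mu(R_{I,D}) > 0$ for $D$ large, and this will follow once we know both $\mu(R_I) > 0$ and $\mu(R_I \setminus R_{I,D}) \to 0$.

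First I would use the robustness assumption to show $\mu(R_I) > 0$. By \cref{def:robust}, there exist a valid assignment $\bm{a}_0 \in C_{\bm{a}}$ and some $\epsilon > 0$ such that every $\bm{a}$ within Euclidean distance $\epsilon$ of $\bm{a}_0$ is also valid. Consequently the intersection of the open ball $\{\bm{a} \mid \|\bm{a}-\bm{a}_0\|_2 < \epsilon\}$ with $C_{\bm{a}}$ is contained in $R_I$. Because $C_{\bm{a}}$ is a hyper-rectangle with non-empty interior and $\bm{a}_0 \in C_{\bm{a}}$, this intersection contains either a full ball, a half-ball, or a corner patch, each of which has strictly positive Lebesgue measure; hence $\mu(R_I) > 0$.

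Next I would directly invoke \cref{thm:convergence}: the hypothesis that $\{\bm{a} \in C_{\bm{a}} \mid J(\bm{a}) = 0\}$ has Lebesgue measure zero is given, so $\lim_{D \to \infty} \mu(R_I \setminus R_{I,D}) = 0$. Combined with $R_{I,D} \subseteq R_I$, we obtain
\begin{equation*}
\mu(R_{I,D}) = \mu(R_I) - \mu(R_I \setminus R_{I,D}),
\end{equation*}
and choosing $D$ large enough so that the second term falls below $\frac{1}{2}\mu(R_I)$ yields $\mu(R_{I,D}) > 0$, which immediately implies $R_{I,D} \neq \emptyset$.

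The main subtlety I expect concerns iterations at which \cref{alg:cluster} falls back to $h_d(\bm{a}) = 1$ (so that $R_{I,d} = \emptyset$). One needs to argue that the convergence theorem already handles this automatically: if the SDP relaxation were infeasible for arbitrarily large $D$, then along that subsequence $\mu(R_I \setminus R_{I,D}) = \mu(R_I) > 0$, contradicting convergence to zero. Hence the SDP must become solvable for all sufficiently large $D$, and the constructed polynomial $h_D$ then yields a genuinely non-empty under-approximation, completing the argument.
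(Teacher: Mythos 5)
Your proposal is correct and follows essentially the same route as the paper's own proof: robustness gives $R_I$ positive Lebesgue measure, and \cref{thm:convergence} (together with the soundness inclusion $R_{I,D}\subseteq R_I$) forces $\mu(R_{I,D})>0$, hence non-emptiness, for $D$ large enough. Your treatment is in fact slightly more careful than the paper's two-line argument, since you explicitly handle the case where $\bm{a}_0$ lies on the boundary of $C_{\bm{a}}$ and the iterations where the SDP relaxation is infeasible and $R_{I,d}=\emptyset$.
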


\begin{proof}
    By definition, a polynomial template $I(\bm{a},\bm{x})$ is robust if $R_I$ has an interior point.
    Combining \cref{thm:lasserre15} and \cref{thm:convergence},
    $R_{I,d}$ has a positive Lebesgue measure when $d$ is large enough, 
    which implies that $R_{I,d}$ is a non-empty under-approximation of the valid set $R_I$.
\end{proof}

\revise{Combining \cref{thm:convergence} and \cref{thm:weakComplete}, we know that the valid set $R_I$ admits an arbitrarily close approximation by solving Prog.~(\ref{eq:qsos-inv-relax}) for sufficiently large $d$ (and $d_r$), which gives an approximation solution to the strong invariant synthesis problem.
}
Finally, we end this section with an illustrative example.

\begin{example}
\label{ex:1}
Consider a discrete-time dynamical system presented in Code.~\ref{code:illustritive}.

\begin{listing}[!h]
\begin{minted}[mathescape, escapeinside=||]{c}
    // Program variables: $(x, y) \in \Real^2$
    // Range: $C_{x,y} = \{(x,y)\mid 4-x^2\ge 0, 4-y^2\ge 0\}$
    // Precondition: {$(x,y) \mid x^2 + y^2 - 1 \le 0$} 
    while (|$x^2-0.81\le 0$|) {
        // omit case($-1\le 0$)
        |$x \gets 0.95(x-0.1y^2)$|;
        |$y \gets 0.95(y+0.2xy)$|;
    }
    // Postcondition: {$(x,y) \mid 0.25 - x^2 - (y - 1.5)^2 \le 0$}
\end{minted}
\caption{A Discrete-Time Dynamical System}
\label{code:illustritive}
\end{listing}

Suppose that we are searching for ellipsoid-shaped invariants centered at the origin of the form 
\begin{equation}
    x^2 + \hat{a} y^2 + \hat{b} \in \Real[\hat{a},\hat{b},x,y],
\end{equation}
where $\hat{a}$ and $\hat{b}$ are parameters within the range $(\hat{a},\hat{b})\in [-10,10]^2$. 
By replacing $\hat{a}$ and $\hat{b}$ by $10a$ and $10b$, we denote the polynomial template by
\begin{equation}
    I(a,b,x,y) = x^2 + 10a y^2 + 10b,
\end{equation}
where $(a,b)\in C_{a,b}=\{(a,b)\mid 1-a^2\ge 0, 1-b^2\ge 0\} = [-1,1]^2$.

Let $D=6$ be the upper bound on the degree of $h(a,b)$. 
By applying Lasserre's technique, we solve the following program to obtain $h(a,b)$, for $1\le d\le D$:
\begin{equation}\label{eq:ex1-prog}
    \begin{aligned}
        \min \quad & \sum_{\beta_1 + \beta_2 \le d} \bm{\gamma}_{\beta_1,\beta_2}\mathrm{h}_{\beta_1,\beta_2}\\
        s.t.  \quad
        & h(a,b) = \sum_{\beta_1+\beta_2\le d} \mathrm{h}_{\beta_1,\beta_2} a^{\beta_1}b^{\beta_2} \in \Real^d[a,b],\\
        & h(a,b) - I(a,b,x,y) 
        = \sigma_{0,0} + \sigma_{0,1}\cdot (1-x^2-y^2)\\
        & \qquad  + \sigma^{a,b}_{0,1} \cdot(1-a^2)+ \sigma^{a,b}_{0,2} \cdot(1-b^2) + \sigma^{x,y}_{0,1}\cdot (4-x^2) + \sigma^{x,y}_{0,2}\cdot (4-y^2),\\
        & h(a,b) - I\big(a,b,0.95(x-0.1y^2),0.95(y+0.2xy)\big) = \sigma_{1,0} - \sigma_{1,1} \cdot I(a,b,x,y) + \sigma_{1,2}\cdot (0.81-x^2)\\
        & \qquad  + \sigma^{a,b}_{1,1} \cdot(1-a^2)+ \sigma^{a,b}_{1,2} \cdot(1-b^2) + \sigma^{x,y}_{1,1} \cdot(4-x^2) + \sigma^{x,y}_{1,2} \cdot(4-y^2),\\
        & h(a,b) - \big(0.25 - x^2 -(y-2)^2\big)
        = \sigma_{2,0} - \sigma_{2,1} \cdot I(a,b,x,y)\\
        & \qquad  + \sigma^{a,b}_{2,1} \cdot(1-a^2)+ \sigma^{a,b}_{2,2} \cdot(1-b^2) + \sigma^{x,y}_{2,1}\cdot (4-x^2) + \sigma^{x,y}_{2,2} \cdot (4-y^2),\\
        & h(a,b) + 1 = \sigma_{3,0} + \sigma^{a,b}_{3,1} \cdot (1-a^2)+ \sigma^{a,b}_{3,2} \cdot (1-b^2) + \sigma^{x,y}_{3,1} (4-x^2) + \sigma^{x,y}_{3,1}\cdot (4-y^2),\\
        & \sigma_{0,0}, \sigma_{1,0}, \sigma_{2,0}, \sigma_{3,0}  \in \Sigma^{2d_r}[a,b,x,y], 
        \sigma_{1,1}, \sigma_{2,1} \in \Sigma^{2d_r-3}[a,b,x,y],\\
        &\text{all other SOS polynomials are in } \Sigma^{2d_r-2}[a,b,x,y],
    \end{aligned}
\end{equation}
where 
\begin{equation}
    \gamma_{\beta_1,\beta_2} = \frac{1}{2^{2}}\int_{-1}^1 \int_{-1}^1 a^{\beta_1}b^{\beta_2} \mathrm{d} a \mathrm{d} b=
    \begin{cases}
        0  &\text{if either $\beta_1$ or $\beta_2$ is odd},\\
        \frac{1}{(\beta_1+1)(\beta_2+1)} & \text{otherwise,}
    \end{cases}
\end{equation}
and $d_r$ is the smallest natural number such that $2d_r\ge \max\{d, 3\}$. 
Here $\deg(I)=3$ is the largest degree of polynomials in constraints other than $h(a,b)$.

In this example, we use \textsc{Yalmip}~\cite{Lofberg2004} to formulate Prog.~(\ref{eq:ex1-prog}) and \tool{Mosek} solver~\cite{mosek} to solve the translated SDP.
For $d=1$ and $2$, the program is not solvable.
For $d=3,4,5,$ and $6$, we obtain, rounding to 5 decimal places,
\begin{align*}
    h_{3}(a,b) &= 2.69187 + \cdots - 1.50005 ab^2 - 2.10735 a^3,\\
    h_{4}(a,b) &= 2.34708 + \cdots - 0.00418 a^3b - 0.74973 a^4,\\
    h_{5}(a,b) &= 2.09144 + \cdots + 2.23803 a^4b + 2.57913 a^5,\\ 
    h_{6}(a,b) &= 1.74276 + \cdots - 1.82314 a^5b + 1.67067 a^6,
\end{align*}
which give under-approximations to the valid set $R_I$. 
In \cref{fig}, we plot the $0$-sublevel sets of the above four polynomials.
Therefore, any point $(a,b)$ in the light blue region is a valid parameter assignment for the polynomial template $I(a,b,x,y)$.

\revise{
It's important to note that for small values of $d$, the inequality $h_{d}(\bm{a})\le 0$ might not have any solutions, i.e., $R_{I,d}=\emptyset$.
When this scenario occurs, we need to increase $D$ to search for polynomials $h_{d}(\bm{a})$ of higher degrees.
Alternatively, we can also enhance the approximation by shrinking the size of $C_{\bm{a}}$. 
For example, suppose we want to obtain a more accurate approximation than the $0$-sublevel set of $h_3(a,b)$ over the domain $C_{\bm{a}}'=[-1,0]\times [-1,0]$.
We first partition the range $C_{\bm{a}}'$ into four smaller boxes.
Then, we solve Prog.~(\ref{eq:ex1-prog}) with respect to each  small box, still with relaxation order $d=3$.
For instance, over the box $[-0.5,0]\times [-0.5,0]$, we obtain
\begin{equation*}
    h'_3(a,b)=-0.5311 + \dots + 0.0478ab^2 + 0.54927b^3, 
\end{equation*}
where the range of $a,b$ is scaled to $[-1,1]\times[-1,1]$.
The $0$-sublevel set $h'_3(a,b)\le 0$ is depicted in \cref{fig:partition1}.
After scaling back, we can see the combination of under-approximations over these four smaller boxes cover the original approximation at $d=3$, as shown in \cref{fig:partition2}.
}

\begin{figure}[t]
\captionsetup{font={small}}
    \centering
    \begin{subfigure}[b]{0.22\textwidth}
        \centering
        \includegraphics[width=\textwidth]{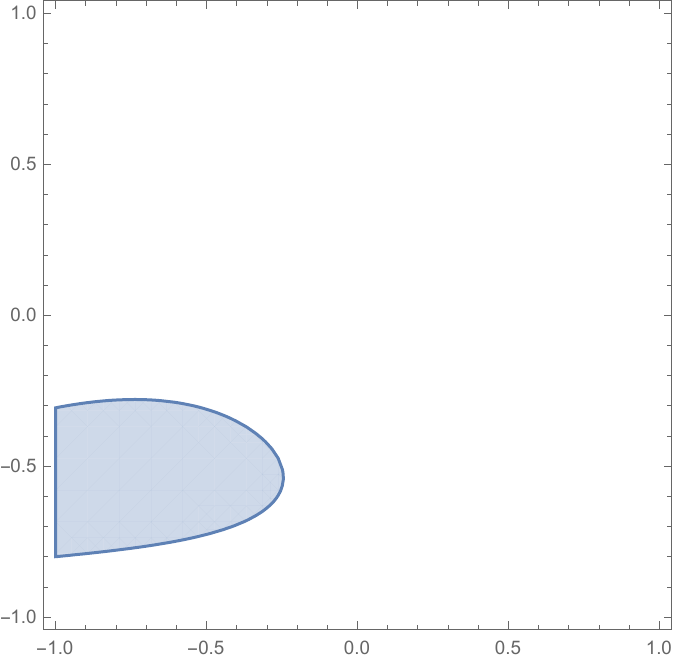}
        \caption{$d=3$}
    \end{subfigure}
    \begin{subfigure}[b]{0.22\textwidth}
        \centering
        \includegraphics[width=\textwidth]{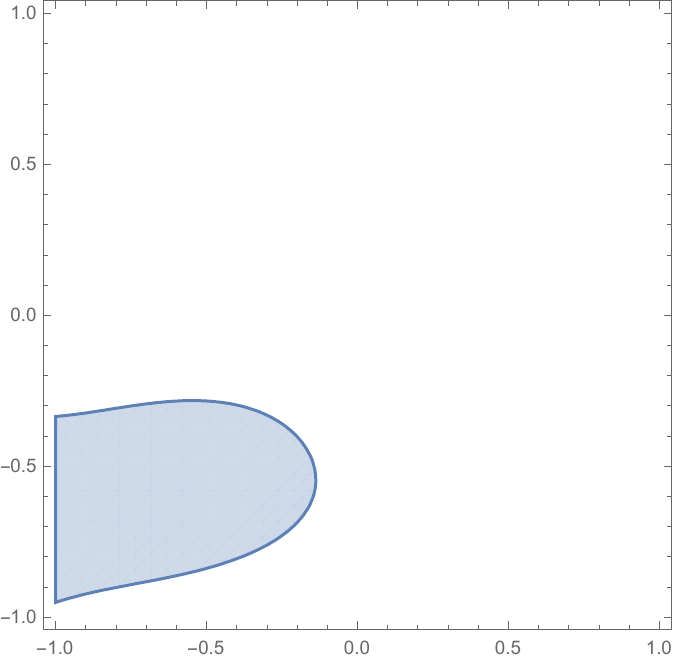}
        \caption{$d=4$}
    \end{subfigure}
    \begin{subfigure}[b]{0.22\textwidth}
        \centering
        \includegraphics[width=\textwidth]{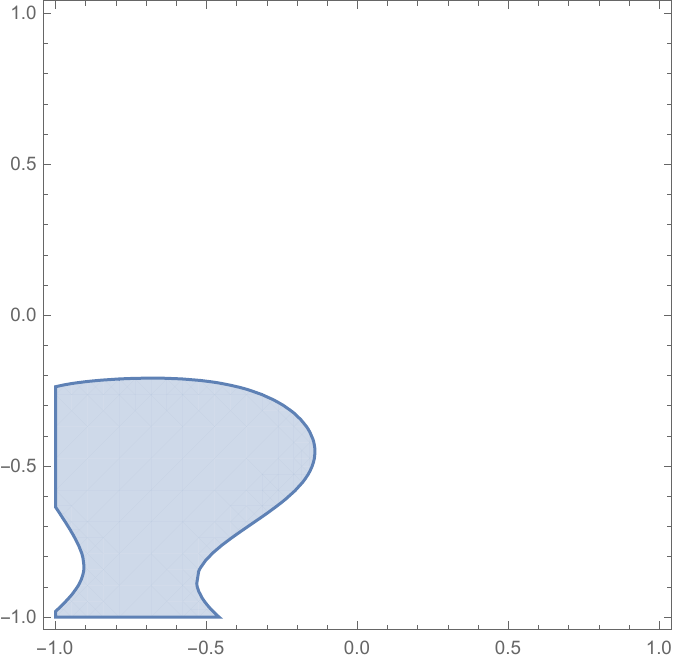}
        \caption{$d=5$}
    \end{subfigure}
    \begin{subfigure}[b]{0.22\textwidth}
        \centering
        \includegraphics[width=\textwidth]{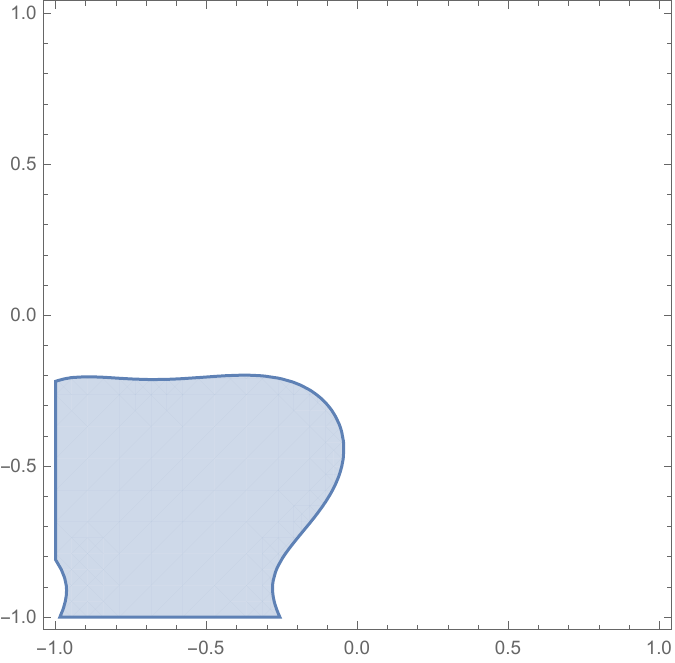}
        \caption{$d=6$}
    \end{subfigure}
    \\
    \scriptsize{
    x-axis: $a$ from -1.0 to 1.0; y-axis: $b$ from -1.0 to 1.0
    }
    \captionsetup{justification=centering}
    \caption{Under-approximations of {$R_I$} defined by 
     $R_{I,d}=\{(a,b)\in [-1,1]^2\mid h_{d}(a,b)\le 0\}$ for $d=3,4,5,6$.}
    \label{fig}
\end{figure}

\begin{figure}[t]
    \centering
        \begin{subfigure}[b]{0.22\textwidth}
        \centering
        \includegraphics[width=\textwidth]{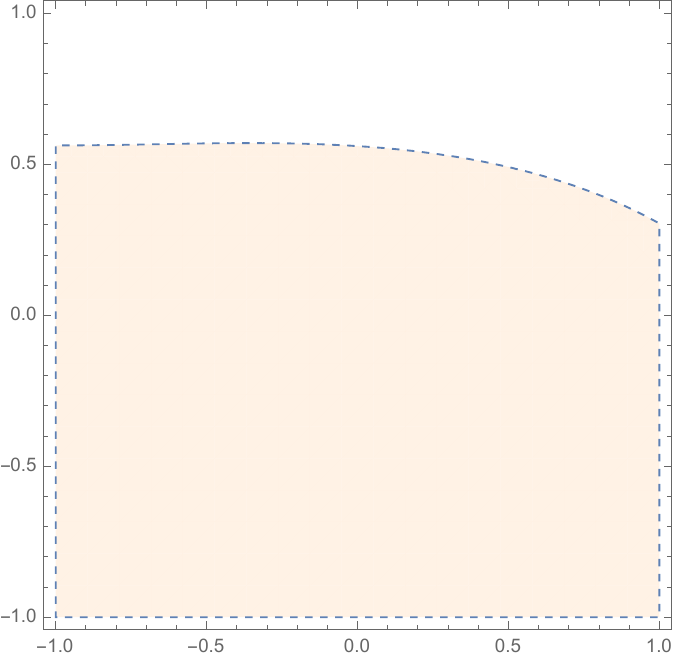}
        \caption{}
        \label{fig:partition1}
    \end{subfigure}
    \qquad 
    \begin{subfigure}[b]{0.22\textwidth}
        \centering
        \includegraphics[width=\textwidth]{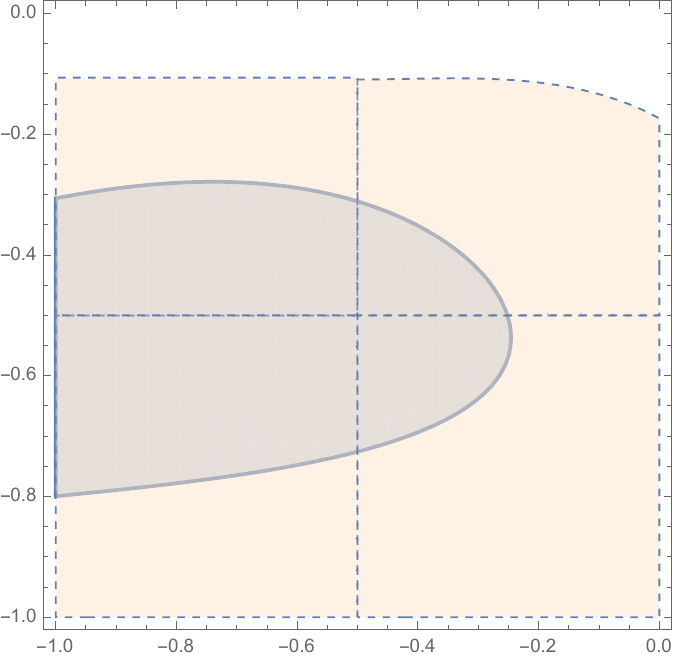}
        \caption{ }
        \label{fig:partition2} 
    \end{subfigure}
    \\
    \scriptsize{
    (a): $h'_3(a,b)\le 0$ over $[-0.5,0]\times [-0.5,0]$.\\
    (b): the comparison between $h_3(a,b)\le 0$ (light blue) and $h'_3(a,b)\le 0$ (light orange) over $[-1,0]\times [-1,0]$.
    }
    \captionsetup{justification=centering}
    \caption{A more accurate under-approximation of $R_I$ by partitioning.}
\end{figure}

Furthermore, using the solutions above, the weak invariant synthesis problem becomes significantly simpler. 
It can be reduced to solving a single polynomial inequality of the form $h_d(a,b)\le 0$ for $a,b\in C_{\bm{a}}$.
When the degree and the number of variables are not large, this type of problem can be efficiently addressed using general-purpose numerical optimization solvers or symbolic computation tools.
For example, when $d=3$, a solution $a=-1$, $b=-0.79971$ is returned by \textsf{FindInstance} function in \tool{Mathematica},
which implies
\begin{equation*}
    x^2 - 10 y^2 - 7.9971 \le 0
\end{equation*}
is an invariant of Code.~\ref{code:illustritive}.
Even though this example seems simple enough, many existing invariant synthesizing tools do not support this nonlinear template or fail to synthesize a suitable invariant~\cite{joel2018polynomial, CSS03CAV, kincaid2017nonlinear}.
On the other hand, directly applying symbolic solvers such as \textsc{Z3} or \textsf{Redlog}~\cite{REDLOG} to constraints Eqs.~(\ref{eq:inv-pre})-(\ref{eq:inv-post}) also fails to produce a satisfying assignment of parameters in an hour. 

\end{example}

\section{Extensions of Cluster Algorithm}
\label{sec:extension}

In this section, 
we extend our approach to allow for more complex program structures and invariant templates.

\subsection{Nested Loops}
\label{sec:extension-2}
In \cref{sec:under-approximation},
we have focused on unnested conditional loops of the form Code~\ref{code:model}.
In fact, our approach can be extended to nested loops or even control flow graphs without substantial changes.

\begin{listing}[h]
\begin{minted}[mathescape, escapeinside=||]{c}
    // Program variables: $\bm{x}\in \mathbb{R}^n$
    // Precondition: $\mathit{Pre} = \{ \bm{x}\mid \bm{q}_{pre}(\bm{x}) \le 0$} 
    // Invariant (outer): $I_1(\bm{a},\bm{x})$ with $\bm{a}\in \mathbb R^{n'_1}$
    while (|$g_1(\bm{x})\le 0$|) {
        |$\bm{x} \gets \bm{f}_1(\bm{x})$|;
        // Invariant (inner): $I_2(\bm{b},\bm{x})$ with $\bm{b}\in \mathbb R^{n'_2}$
        while (|$g_2(\bm{x})\le 0$|) {
            |$\bm{x} \gets \bm{f}_2(\bm{x})$|;
        }        
    }
    // Postcondition: $\mathit{Post} = \{ \bm{x} \mid \bm{q}_{post}(\bm{x})\le 0\}$
\end{minted}
\caption{A Simple Nested Loop}
\label{code:nested}
\end{listing}

To illustrate the main idea, 
let us consider a simple nested loop of the form Code~{\ref{code:nested}}. 
Synthesizing invariants for Code~{\ref{code:nested}} is more challenging compared to Code~\ref{code:model} because it involves two invariants:
$I_1(\bm{a},\bm{x})\in \mathbb R[\bm{a},\bm{x}]$ for the outer while loop and $I_2(\bm{b},\bm{x})\in \mathbb R[\bm{b},\bm{x}]$ for the inner while loop.
Similar to the unnested cases, we assume that $\bm{a}\in C_{\bm{a}}, \bm{b}\in C_{\bm{b}}$ for some known hyper-rectangle $C_{\bm{a}}, C_{\bm{b}}$.
The goal is to find valid assignments of the parameters $\bm{a}$ and $\bm{b}$ satisfying the following constraints:
\begin{align}
    &\forall \bm{x} \in C_{\bm{x}}.~\bm{q}_{pre}(\bm{x})\le 0\implies I_1(\bm{a},\bm{x})\le 0,\label{eq:outer-pre}\\
    &\forall \bm{x} \in C_{\bm{x}}.~I_1(\bm{a},\bm{x})\le 0 \wedge g_1(\bm{x})\le 0 \implies I_2(\bm{b},\bm{f}_1(\bm{x}))\le 0, \label{eq:inner-pre}\\
    &\forall \bm{x} \in C_{\bm{x}}.~I_2(\bm{b},\bm{x})\le 0 \wedge g_2(\bm{x})\le 0 \implies I_2(\bm{b},\bm{f}_2(\bm{x}))\le 0, \label{eq:inner-ind}\\
    &\forall \bm{x} \in C_{\bm{x}}.~I_2(\bm{b},\bm{x})\le 0 \wedge  g_2(\bm{x})\ge 0 \implies I_1(\bm{a},\bm{x})\le 0,\label{eq:inner-post}\\    
    &\forall \bm{x} \in C_{\bm{x}}.~I_1(\bm{a},\bm{x})\le 0 \wedge  g_1(\bm{x})\ge 0 \implies \bm{q}_{post}(\bm{x})\le 0.\label{eq:outer-post}
\end{align}
Here Eqs.~(\ref{eq:inner-pre})-(\ref{eq:inner-post}) play two roles:
for the inner loop, they encode the conditions of $I_2$ to be an invariant with $I_1$ serving as both precondition and postcondition;
for the outer loop, they collectively encode the inductive condition of $I_1$.

We can observe that constraints Eqs.~(\ref{eq:outer-pre})-(\ref{eq:outer-post}) still adhere to the form of the constraint in Prog.~(\ref{eq:para-pop}), albeit parameters $\bm{a}$ are replaced by $(\bm{a},\bm{b})$.
As a result, our approach in \cref{sec:under-approximation} remains applicable, and the soundness, convergence, and semi-completeness results carry over.
Our approach can generally be applied to programs represented by control flow graphs.

\subsection{Semialgebraic Templates}
\label{sec:extension-3}
In this subsection, we discuss the extensions of our approach to deal with more general polynomial templates, called (basic) semialgebraic templates.

\revise{
\begin{definition}[(Basic) Semialgebraic Template]
\label{def:semi}
    A \emph{semialgebraic template} is a finite collection of polynomials 
    $I_{t,r}(\bm{a}, \bm{x}) \in \mathbb{R}[\bm{a}, \bm{x}]$ defined over $C_{\bm{a}} \times C_{\bm{x}}$, 
    where $C_{\bm{a}}\subseteq \mathbb{R}^{n'}$ is a hyper-rectangle.
    Given a parameter assignment $\bm{a}_0 \in \mathbb{R}^{n'}$, 
    the instantiation of the invariant $\mathit{Inv}$ w.r.t. $\bm{a}_0$ is
    the set ${\{ \bm{x} \mid \bigvee_{t} \bigwedge_{r} I_{t,r}(\bm{a}_0, \bm{x}) \leq 0 \}}$.
    If $t$ belongs to a singleton set, then the template is called a \emph{basic semialgebraic template}, and  the instantiation of the invariant is the set ${\{ \bm{x} \mid \bigwedge_{r} I_{r}(\bm{a}_0, \bm{x}) \leq 0 \}}$.
\end{definition}
}

The robustness of (basic) semialgebraic templates is defined similar to that of polynomial templates.
\begin{definition}[Robustness]\label{def:robust-semi}
A semialgebraic template $\{I_{t,r}(\bm{a},\bm{x})\}_{t,r}$ is said to be \emph{robust} (w.r.t. the program model Code.~\ref{code:model}) if there exists a valid parameter assignment $\bm{a}_0\in C_{\bm{a}}$ and a small constant $\epsilon>0$ such that any $\bm{a}$ satisfying $\|\bm{a}-\bm{a}_0\|<\epsilon$ is still valid.
\end{definition}

In the following, we briefly show that techniques in \cref{sec:under-approximation} can be directly applied to the cases when templates are basic semialgebraic (instead of only polynomial) without substantial changes. 
After that, we discuss how to deal with semialgebraic templates in the general form. 

When given a basic semialgebraic template or a semialgebraic template, the invariant conditions Eqs.~(\ref{eq:inv-pre})-(\ref{eq:inv-post}) should be modified by replacing $I(\bm{a}, \bm{x})$ with $\bigwedge_r I_r(\bm{a},\bm{x}) \le 0$ or $\bigvee_{t} \bigwedge_{r} I_{t,r}(\bm{a}_0, \bm{x}) \leq 0$, respectively.
Recall that the original constraints Eqs.~(\ref{eq:inv-pre})-(\ref{eq:inv-post}) take the form of Prog.~(\ref{eq:para-pop}), the new constraints can be viewed as replacing the constraint in Prog.~(\ref{eq:para-pop}) by 
\begin{equation} \label{eq:l-basic}
    \forall \bm{x}. \bigwedge_{i=1}^m p_i(\bm{a},\bm{x})\ge 0\implies \bigwedge_r l_r(\bm{a},\bm{x})\le 0,
\end{equation}
or 
\begin{equation} \label{eq:l-semi}
    \forall \bm{x}. \bigwedge_{i=1}^m p_i(\bm{a},\bm{x})\ge 0\implies \bigvee_{t} \bigwedge_{r} l_{t,r}(\bm{a}, \bm{x}) \leq 0,
\end{equation}
where $p_i(\bm{a},\bm{x}), l_r(\bm{a},\bm{x}), l_{t,r}(\bm{a},\bm{x})\in \Real[\bm{a},\bm{x}]$.

As for basic semialgebraic templates, the problem can be reduced to the polynomial case with minor modifications.
This is because the constraint \cref{eq:l-basic} can be rewritten as
\begin{equation}
    \forall \bm{x}. \bigwedge_{i=1}^m p_i(\bm{a},\bm{x})\ge 0\implies l_r(\bm{a},\bm{x})\le 0,\quad \text{ for each } r.
\end{equation}
Consequently, the derived SOS relaxations will be like Prog.~(\ref{eq:qsos-relax}) but will involve more constraints. 
After that, all other results can be derived similarly. 

For general semialgebraic templates, unfortunately, simply rewriting the constraints no longer works due to the existence of disjunction. 
To address this issue, we resort to the \emph{lifting} technique introduced in~\cite{lasserre2012beyond} to reduce \cref{eq:l-semi} to the form of Prog.~(\ref{eq:para-pop}).

First, note that \cref{eq:l-semi} is equivalent to 
\begin{equation}
    \forall \bm{x}. \bigwedge_{i=1}^m p_i(\bm{a},\bm{x})\ge 0\implies s(\bm{a},\bm{x}) \le 0,
\end{equation}
where $s(\bm{a},\bm{x})=\min_t \max_r l_{t,r}(\bm{a},\bm{x})$.

Let us say a function $f(\bm{x})$ is a semialgebraic function if its graph $\{ (\bm{x}, f(\bm{x})) \in \Real^{n+1}\mid \bm{x}\in \Real^n\}$ is a semialgebraic set.
By Tarski-Seidenberg principle~\cite[Prop.~2.2.4]{bochnak1998real} and the definition of $s(\bm{a},\bm{x})$, we can prove $s(\bm{a},\bm{x})$ is a semialgebraic function.
Since the graph of every semi-algebraic function is the projection of a basic semialgebraic set in the lifted space~\cite[Lem.~3]{lasserre10jo},
we know that the graph
\begin{equation}
    \set[\bigg]{\big(\bm{a},\bm{x},s(\bm{a},\bm{x})\big)\given \bigwedge_{i=1}^m p_i(\bm{a},\bm{x})\ge 0}
    \subset \mathbb R^{n+n'+1}
\end{equation}
is the projection of some basic semialgebraic set 
$\hat{\mathcal K}\subseteq \mathbb R^{n+n'+1+u}$ for some $u\in\mathbb N$, i.e., 
\begin{equation}
    \set[\bigg]{(\bm{a},\bm{x},v)\given v=s(\bm{a},\bm{x}) \wedge \bigwedge_{i=1}^m p_i(\bm{a},\bm{x})\ge 0 } = 
    \set[\bigg]{(\bm{a},\bm{x},v)\given \exists \bm{w}\in \mathbb R^{u}. (\bm{a},\bm{x},v,\bm{w})\in \hat{\mathcal{K}}}.
\end{equation}
where $v\in \Real$ and $\bm{w}\in \Real^u$ are fresh variables

Here, $\hat{\mathcal K}$ is called the \emph{lifting} of $\mathcal K$ and can be computed following the techniques in~\cite{lasserre2012beyond}.
For now, we assume that $\hat{\mathcal K}$ is already obtained as 
\begin{equation}
    \hat{\mathcal K} = \set[\bigg]{ (\bm{a},\bm{x},v,\bm{w}) 
    \given
    \bigwedge_{i=1}^{\hat{m}} \hat{p}_i(\bm{a},\bm{x},v,\bm{w})\ge 0},
\end{equation}
where $\hat{p}_i(\bm{a},\bm{x},v,\bm{w}) \in \Real[\bm{a},\bm{x},v,\bm{w}]$.
Then, \cref{eq:l-semi} is equivalent to the following constraint in higher dimensions:
\begin{equation}
    \forall (\bm{x},v,\bm{w}).~ 
    \bigwedge_{i=1}^m p_i(\bm{a},\bm{x})\ge 0 \wedge  \bigwedge_{i=1}^{\hat{m}} \hat{p}_i(\bm{a},\bm{x},v,\bm{w})\ge 0 \implies v \le 0,
\end{equation}
which conforms to the form of Prog.~(\ref{eq:para-pop}) when treating $v,\bm{w}$ equivalently as $\bm{x}$.

\begin{remark}
In practice, our algorithm is less efficient for general semialgebraic templates compared to polynomial and basic semialgebraic templates. 
The main reason lies in the lifting process, 
which dramatically increases either the degree of defining polynomials or the number of parameters, sometimes even both.     
\end{remark}

\section{Synthesizing Weak Invariants From Masked Templates}
\label{sec:masked}

Recall that the semi-completeness of our Cluster algorithm hinges on the assumption that the given invariant templates are robust.
In \cref{sec:5-1}, we delve into situations where this assumption is violated, which implies that the Cluster algorithm may fail to produce a non-trivial under-approximation of the valid set. 
In these scenarios, in \cref{sec:5-2}, we identify a special subclass of basic semialgebraic templates, called \emph{Masked templates}, consisting of parametric polynomial equalities and some known polynomial inequalities.
Regarding these templates,  \cref{sec:5-3} proposes the \textbf{Mask algorithm} to translate the invariant conditions by exploiting the structure of masked templates. The resulting constraints can also be solved by SOS relaxations.

\subsection{On Robustness of Semialgebraic Templates} \label{sec:5-1}

The semi-completeness result (\cref{thm:weakComplete}) of the Cluster algorithm relies on the assumption that the given invariant templates are robust. 
\revise{
When the valid set $R_I$ is non-empty but the assumption is violated, solving Prog.~(\ref{eq:qsos-inv-relax}) will never yield a non-empty under-approximation $R_{I,d}$ for any $d\in \Nat$.
In this case, the Cluster algorithm is ineffective.
}

\revise{
While checking the robustness of a template is decidable (see \cref{prop:decide}), the decision procedure involves quantifier elimination, a computationally expensive process.
Fortunately, we have the following empirical observation:
In most cases, when a basic semialgebraic template does not contain equalities, either the template is robust or the valid set $R_I$ is empty.
In this context, having an equality in a basic semialgebraic template means that, for some polynomial $I_i(\bm{a},\bm{x})\in \Real[\bm{a},\bm{x}]$, both $I_i(\bm{a},\bm{x})$ and $-I_i(\bm{a},\bm{x})$ are contained in the template. 
}

\revise{
To illustrate the intuition behind this observation, consider a simple loop of the form \begin{equation*}
\textsf{while~($-1\le 0$)~do~$\{\bm{x}\gets \bm{x}\}$}.
\end{equation*}
}Suppose that we are given a precondition $Pre = \{\bm{x} \mid \bm{x}-\bm{x_0}\le 0, \bm{x}_0-\bm{x}\le 0\}$ for some $\bm{x}_0\in \Real^n$ and a basic semialgebraic template 
\begin{equation}
    \{I_1(\bm{a},\bm{x}),-I_1(\bm{a},\bm{x}),\dots,I_m(\bm{a},\bm{x}),-I_m(\bm{a},\bm{x}),I_{m+1}(\bm{a},\bm{x}),\dots,I_{s}(\bm{a},\bm{x})\}
\end{equation}
where $\bm{a}$ is linear in $I_{i}(\bm{a},\bm{x})$ for $i=1,\dots,s$.
Since the loop never terminates, the postcondition is irrelevant.
In this setting, the invariant conditions Eqs.~(\ref{eq:inv-pre})-(\ref{eq:inv-post}) can be reduced to the following single constraint:
\begin{equation}
    \begin{aligned}
    \forall \bm{x}.~\bm{x}-\bm{x}_0= 0 \implies &I_1(\bm{a},\bm{x})=0\wedge \cdots \wedge I_m(\bm{a},\bm{x})=0 \\
    &\qquad \wedge I_{m+1}(\bm{a},\bm{x}) \le 0 \wedge \cdots \wedge I_{s}(\bm{a},\bm{x})\le 0, 
    \end{aligned}
\end{equation}
which is equivalent to a linear system of equations in parameters $\bm{a}$:
\begin{equation}
    \begin{aligned}
    I_i(\bm{a},\bm{x}_0) &= 0 \text{ for } i=1,\dots,m\\
    I_j(\bm{a},\bm{x}_0) &\le 0 \text{ for } j=m+1,\dots,s.   
    \end{aligned}
\end{equation}
Then, the valid set $R_I$ is exactly the solution set to this linear system.
Suppose the linear expressions $I_i(\bm{a},\bm{x}_0)$ are linearly independent.
Based on the standard knowledge of linear algebra, the dimension of the set of solutions is generally $n'-m$, where $n'$ is the dimension of $\bm{a}$.
\revise{
However, $R_I$ having an interior point means that $R_I$ is of dimension $n'$, which requires $m=0$, i.e., there are no equalities in the template.
}


In practice, when the precondition and the postcondition contain equalities, we tend to need a basic invariant template with equalities.
Even though the Cluster algorithm may fail in such situations, we can utilize the structures of the equalities to design more efficient algorithms.
This  motivates  the definition of masked templates and the Mask algorithm below.


\subsection{Masked Templates}\label{sec:5-2}

Before presenting the definitions, we fix some notations.
Recall that $\bm{f}_i:\mathbb R^n \to \mathbb R^n$ is the assignment function of the $i$th branch.
We slightly abuse the notation $\bm{f}_{i, \bm{y}}$ to denote the projection of $\bm{f}_i$ onto variables $\bm{y}\subseteq \bm{x}$.
For example, if $\bm{f}_{1}(x_1,x_2,x_3)=(x_1,x_2^2,x_3^3)$ and $\bm{y}=(x_2,x_3)$, then $\bm{f}_{1,\bm{y}}=(x_2^2,x_3^3)$.

\begin{definition}[Core Variables]
\label{def:core}
Given a loop as in Code~\ref{code:model}, if the program variables can be divided into two non-empty parts $\bm{x}=(\bm{y},\bm{z})$, where $\bm{y} \in \Real^{n_{\bm{y}}}$ and $\bm{z} \in \Real^{n_{\bm{z}}}$ with $n_{\bm{y}}+n_{\bm{z}}=n$, such that: 
\begin{enumerate}
    \item for each $y$ in $\bm{y}$, $\bm{f}_{i,y} \in \mathbb R[\bm{y}]$;
    \item for each $z$ in $\bm{z}$, $\bm{f}_{i,z}\in \mathbb R[\bm{y},\bm{z}]$ and is linear in $\bm{z}$;
    \item the loop guard $\bm{g}$ and branch conditionals $\bm{c}_i$  are independent of $\bm{z}$, and the postcondition $\bm{q}_{post}(\bm{y},\bm{z})$ is linear in $\bm{z}$.
\end{enumerate}
then we call $\bm{y}$ \emph{core variables} and $\bm{z}$ \emph{non-core variables}.
\end{definition}
\begin{definition}[Masked Template]
\label{def:masked}
    Given core variables $\bm{y}$ and non-core variables $\bm{z}$,
    a masked template is a finite collection of polynomials $I_r(\bm{a}, \bm{y}) \in \mathbb{R}[\bm{a}, \bm{y}]$ linear in $\bm{a}$ and polynomials $I_t(\bm{y})\in \mathbb{R}[\bm{y}]$, 
    \revise{where $r\in \{1,\dots,n_{\bm{z}}\}$ and $t\in T$ for some finite index set $T$}. 
    Given a parameter assignment $\bm{a}_0 \in \mathbb{R}^{n'}$, 
    the instantiation of the invariant $\mathit{Inv}$ w.r.t. $\bm{a}_0$ is
    the set $\{\bm{x} \mid \bigwedge_{r=1}^{n_{\bm{z}}} z_r = I_r(\bm{a}_0, \bm{y}) \wedge \bigwedge_{t\in T} I_t(\bm{y})\le 0 \}$.
\end{definition}

In other words, the program variables are partitioned into two non-empty parts, core variables~$\bm{y}$ and non-core variables~$\bm{z}$;
and the non-core variables $\bm{z}$ will only occur in the precondition, the postcondition, and their assignment functions.
In a masked template, we wish to express non-core variables $\bm{z}$ by core variables $\bm{y}$.

\begin{example}\label{ex:2}
Code~\ref{code:freire1} is the algorithm for finding the closest integer (variable $r$) to the square root (of variable $y$), taken from the benchmark set~\cite{rodriguez2016some}. 
Here, we treat integers as a subset of real numbers.

\begin{listing}[h]
\begin{minted}[mathescape, escapeinside=||]{c}
    // Program variables: $(x,y,r)\in \Real^n$
    // Precondition: $Pre = \{(x,y,r) \mid -y\le 0, x=y/2, r=0\}$
    // Invariant template: {$y= {\color{purple} poly[(x,r),2]}, -x\le 0$}
    while (|$r - x \le 0$|) { // Real invariant: $y= 2x+r^2-r \wedge x\ge 0$
        |$x \gets x - r$|;
        |$y \gets y$|;
        |$r \gets r + 1$|;
    }
    // Postcondition: $Post = \{ y -  r^2 - r \le 0, r^2 - r - y\le 0\}$
\end{minted}
\caption{\textsf{freire1}}
\label{code:freire1}
\end{listing}

In this program, we can view $(x,r)$ as core variables and $y$ as the only non-core variable.  
In the invariant template, the purple part represents some unknown polynomial in $x,r$ of degree $2$. 
In other words, we are given a masked template of the form
$\{y = I(\bm{a},x,r), - x\le 0\}$, where $I(\bm{a},x,r)=a_1 x^2 + a_2 x r+ a_3 r^2 + a_4 x + a_5 r + a_6$ with parameters $\bm a=(a_1,\dots,a_6)$.

In this case, the valid set $R_I$ does not contain an interior point because the following constraint
\begin{equation*}
    \forall (x,y,r)\in C_{\bm{x}}.~
    y \ge 0 \wedge x=y/2 \wedge r=0 \implies y = I(\bm{a},x,r) \wedge x\ge 0
\end{equation*}
implies that $a_1=a_6=0$ and $a_4=2$.
\end{example}

Our observation suggests that core variables often correspond to local variables (such as $x,r$), while non-core variables tend to align with the input arguments (such as $y$).
The motivation for the definition of masked templates is that, for most programs, the invariants include two parts: 
(i) equalities of the form that a part of variables are expressed by the other variables;
(ii) inequalities that are derived from conditionals and monotonicity.
Besides, the name of masked templates is inspired by the so-called ``masked programs'' in~\cite{GHM+23OOPSLA}.

\subsection{Mask Algorithm} \label{sec:5-3}

In this part, we show how to transform the invariant conditions for masked templates based on variable substitution.
The outstanding property of the resulting constraints is that they can be directly solved by SOS relaxations.

Given a masked template as in \cref{def:masked}, we explicitly write down the invariant conditions w.r.t. Eqs.~(\ref{eq:inv-pre})-(\ref{eq:inv-post}), with $\bm{x}$ replaced by $(\bm{y},\bm{z})$ 
\begin{align}
    \forall (\bm{y},\bm{z}) \in C_{\bm{x}}.~
    &\bm{q}_{pre}(\bm{y},\bm{z})\le \bm{0} 
    \implies \bigwedge_{r=1}^{n_{\bm{z}}} z_r=I_r(\bm{a},\bm{y}) \wedge \bigwedge_{t\in T} I_t(\bm{y})\le 0\label{eq:mask-inv-pre}\\
    \forall (\bm{y},\bm{z}) \in C_{\bm{x}}.~
    &\bigwedge_{r=1}^{n_{\bm{z}}} z_r = I_r(\bm{a},\bm{y}) \wedge \bigwedge_{t\in T} I_t(\bm{y})\le 0 \wedge \bm{g}(\bm{y})\le \bm{0} \wedge \bm{c}_i(\bm{y})\le \bm{0} \notag\\
    &\qquad \implies 
    \bigwedge_{r=1}^{n_{\bm{z}}} f_{i,z_r}(\bm{y},\bm{z}) = I_r(\bm{a},\bm{f}_{i,\bm{y}}(\bm{y})) \wedge \bigwedge_{t\in T} I_t(\bm{f}_{i,\bm{y}}(\bm{y}))\le 0, \quad i=1,\dots,k,\label{eq:mask-inv-induc}\\
    \forall (\bm{y},\bm{z}) \in C_{\bm{x}}.~
    &\bigwedge_{r=1}^{n_{\bm{z}}} z_r = I_r(\bm{a},\bm{y}) \wedge \bigwedge_{t\in T} I_t(\bm{y})\le 0 \wedge \neg( \bm{g}(\bm{y})\le 0) \notag\\
    &\qquad \implies \bm{q}_{post}(\bm{y},\bm{z})\le 0.\label{eq:mask-inv-post}
\end{align}
where we omit $\bm{z}$ in $\bm{g}(\bm{y},\bm{z})$, $c_i(\bm{y},\bm{z})$, and $\bm{f}_{i,\bm{y}}(\bm{y},\bm{z})$.

As discussed in Remark.~\ref{remark:bmi}, 
the SOS relaxations of \cref{eq:mask-inv-induc} and \cref{eq:mask-inv-post} can not be translated into SDPs because parameters $\bm{a}$ occur in the left-hand-side of the implications.
However, after a simple variable substitution procedure, the above constraints can be converted into a desired form.
The substitution is based on the following observation in first-order logic:
the formula
\begin{equation}
    \forall y,z.~\big (z=f(y)\wedge A(y)\big ) \implies B(y,z),\label{eq:fol1}
\end{equation}
is equivalent to 
\begin{equation}
    \forall y.~A(y)\implies B(y,f(y)),\label{eq:fol2}
\end{equation}
where $f$ is a function and $A,B$ are formulas. 

Exploiting this idea, \cref{eq:mask-inv-induc} and \cref{eq:mask-inv-post} can be transformed into
\begin{align}
    \forall \bm{y} \in C_{\bm{y}}.~
    &\bigwedge_{t\in T} I_t(\bm{y})\le 0 \wedge \bm{g}(\bm{y})\le \bm{0} \wedge \bm{c}_i(\bm{y})\le \bm{0} \notag\\
    &\qquad \implies 
    \bigwedge_{r=1}^{n_{\bm{z'}}} f_{i,z_r}(\bm{y},\bm{z'}) = I_r\big(\bm{a},\bm{f}_{i,\bm{y}}(\bm{y})\big) \wedge \bigwedge_{t\in T} I_t\big(\bm{f}_i(\bm{y})\big)\le 0, \quad i=1,\dots,k,\label{eq:mask-inv-induc-sub}\\
    \forall \bm{y}\in C_{\bm{y}}.~
    &\bigwedge_{t\in T} I_t(\bm{y})\le 0 \wedge \neg( \bm{g}(\bm{y})\le 0) \implies \bm{q}_{post}(\bm{y},\bm{z'})\le 0.\label{eq:mask-inv-post-sub}
\end{align}
where $\bm{z'} = (I_1(\bm{a},\bm{y}), \dots, I_{n_{\bm{z}}}(\bm{a},\bm{y}))$ and $C_y$ is the domain of $\bm{y}$.

\revise{
Therefore, finding a valid parameter assignment for  Eqs.~(\ref{eq:mask-inv-pre}), (\ref{eq:mask-inv-induc}) and (\ref{eq:mask-inv-post}) is reduced to solving the following program:
\begin{equation}\label{eq:mask}
    \begin{aligned}
        \text{find} \quad &\bm{a}\\
        s.t. \quad & \text{Constraints Eqs.~(\ref{eq:mask-inv-pre}), (\ref{eq:mask-inv-induc-sub}), and (\ref{eq:mask-inv-post-sub})}.
    \end{aligned}
\end{equation}
According to the definition of masked templates, constraints \cref{eq:mask-inv-induc-sub} and \cref{eq:mask-inv-post-sub} contain no nonlinear terms in parameters $\bm{a}$. 
Therefore, Prog.~(\ref{eq:mask}) conforms to the form of  Prog.~(\ref{eq:pop}) and can be solved by using the standard SOS relaxation techniques in \cref{sec:pre-sos}.
As a consequence, the soundness result (\cref{thm:pre-sound}) and semi-completeness result (\cref{thm:pre-complete}) carry over.
}

\IncMargin{1em}
\begin{algorithm2e}[t]
\SetKwInOut{Input}{Input}
\SetKwInOut{Output}{Output}
\ResetInOut{Output} 
\Input{A program $\mathcal{P}$ of the form Code~\ref{code:model}, a masked template, and an upper bound $D\in \mathbb{N}$ on the relaxation order.}
\Output{A valid parameter assignment $\bm{a}_0$.}
\BlankLine
Construct Prog.~(\ref{eq:mask}) using $\mathcal{P}$ and the masked template\;
$d_{\max} \gets$ the largest degree of polynomials in Prog.~(\ref{eq:mask})\;
$d_r \gets \lfloor \frac{d_{\max}+1}{2}\rfloor$\;
\While{
    $d_r \le D$
}{
\eIf{
the $d_r$-th SOS relaxation of Prog.~(\ref{eq:mask}) is solvable
}{
$\bm{a}_0 \gets$ Solve the $d_r$-th SOS relaxation of Prog.~(\ref{eq:mask})\;
\KwRet{$\bm{a}_0$}\algocomment*[l]{a valid parameter assignment}
}{
    $d_r\gets d_r+1$\;
}
}
\KwRet{$\emptyset$}\algocomment*[l]{either $D$ is not large enough or the template has no solution}
\caption{\revise{The Mask Algorithm}}
\label{alg:mask}
\end{algorithm2e}
\DecMargin{1em}

\begin{manualexample}{2}[Continued]
After performing a variable substitution, we construct Prog.~(\ref{eq:mask}) as follows:
\begin{align*}
    \text{find} \quad & \bm{a}\\
    s.t. \quad &\forall (x,y,r)\in C_{x,y,r}.~
    y \ge 0 \wedge x=y/2 \wedge r=0 \implies y = I(\bm{a},x,r) \wedge x\ge 0,\\
    &\forall (x,r)\in C_{x,r}.~
    x\ge 0 \wedge x \ge r \implies I(\bm{a}, x,r)=I(\bm{a},x-r,r+1) \wedge x\ge 0,\\
    &\forall (x,r)\in C_{x,r}.~
    x\ge 0 \wedge x \le r \implies r^2+r\ge I(\bm{a},x, r) \wedge r^2-r\le I(\bm{a},x, r).    
\end{align*}

In particular, by solving the $2$nd SOS relaxation of the above constraints,
we obtain the following numerical result:
\revise{
\begin{align*}
    I(\bm{a}_0,x,r) = & ~ 0.0000000020+1.9999999314\cdot x-0.9999999624\cdot r+\\ 
    & ~ 0.9999999665\cdot r^2-0.0000000041\cdot x^2-0.0000000098\cdot x\cdot r,
\end{align*}
where we round off the coefficients to 10 decimal places just to demonstrate the numerical errors.
In our experiments, we use a technique called \emph{rationalization} (explained later) to eliminate the numerical errors, obtaining:}
\begin{equation*}
I(\bm{a}_0,x,r)=2x+r^2-r,    
\end{equation*}
which can be verified to be an invariant.
\end{manualexample}

\section{Experiments}
\label{sec:experimentsnew}
\revise{
\paragraph{Research Questions}
Our experiments aim to answer the following research questions:
\begin{itemize}
    \item \textbf{RQ1}: How does the \textbf{Cluster algorithm} perform on \emph{strong} invariant synthesis problems with a relatively \emph{small} number of parameters?
    \item \textbf{RQ2}: How does the \textbf{Mask algorithm} perform on \emph{weak} invariant synthesis problems with a relatively \emph{large} number of parameters?
\end{itemize}

Note that the number of parameters significantly impacts the computational complexity of our two algorithms.
For Cluster algorithm, Prog.~(\ref{eq:qsos-inv-relax}) involves SOS polynomials with Gram matrices of size $\binom{n+n'+d_r}{d_r}$.
To make it tractable, we need to keep the sum $n+n'$ (i.e., the number of $\bm{x}$ and $\bm{a}$) small. 
In contrast, Prog.~(\ref{eq:mask}) of the Mask algorithm involves SOS polynomials with Gram matrices of size $\binom{n_{\bm{y}}+d_r}{d_r}$, suggesting that it could be more scalable for problems with a larger number of parameters.
}


\paragraph{Implementation}
We have developed prototypical implementations of our two SDP-based algorithms in \textsc{Matlab} (R2020a),
interfaced with \textsc{Yalmip}~\cite{Lofberg2004} and \textsc{Mosek}~\cite{DBLP:journals/mp/AndersenRT03} to solve the underlying SOS relaxations.
The implementation and benchmarks can be found at  {\color{blue} \url{https://github.com/EcstasyH/invSDP}}. 
All experiments were performed on a 2.50GHz Intel Core i9-12900H laptop running 64-bit Windows 11 with 16GB of RAM and Nvidia GeForce RTX 3060 GPU.

\paragraph{Comparison}
\revise{
Since the strong invariant synthesis algorithms in~\cite{kapur05inv} and~\cite{chatterjee2020polynomial} are not implemented, we mainly compare with the weak invariant synthesis tools.
For our Cluster algorithm, we add one extra step to synthesize weak invariants: When an under-approximation $R_{I,d}=\{\bm{a}\in C_{\bm{a}} \mid h_d(\bm{a})\le 0\}$ is synthesized, we use the symbolic tool \tool{Mathematica} to solve $h_d(\bm{a})\le 0$ (see the comment on line 13 in \cref{alg:cluster}).
}

We wish to compare our tool with the most relevant tool in~\cite{chatterjee2020polynomial}, but, unfortunately, their implementation is not publicly available.
Instead, we primarily compare with \tool{PolySynth}~\cite{GHM+23OOPSLA}, which is a recent template-based synthesis tool that supports both generations of programs and invariants.
When focusing on synthesizing (weak) invariants, \tool{PolySynth} adopts the same strategy as in~\cite{chatterjee2020polynomial} to encode the invariant conditions into quadratic constraints.
It also employs additional techniques and heuristics for further speedup.
To compare with~\cite{kapur05inv}, which is based on quantifier elimination, we try solving the constraints Eqs.~(\ref{eq:inv-pre})-(\ref{eq:inv-post}) in Z3~\cite{Z3}.
Since most problem instances used in our experiments have basic semialgebraic templates, we do not compare with~\cite{LWY+14FCS,AGM15SAS}, which only support polynomial templates.
Apart from template-based approaches, we also provide a comparison with the state-of-the-art machine learning approach LIPuS~\cite{YWW23ISSTA}.

\paragraph{Benchmarks}
Our two algorithms are applicable to different situations depending on whether the basic semialgebraic templates contain equalities.
Regarding this, we design two sets of benchmarks:

\begin{itemize}
    \item \textbf{Cluster benchmarks} include two groups of problem instances, modified programs and dynamical systems.
    (1) The modified programs are obtained from the corresponding programs in the masked benchmarks by relaxing the specifications and adjusting the templates. 
    Such modification ensures that the selected templates are robust so that the Cluster algorithm should be able to find a solution, when $D$ is large enough.
    \revise{
    We also include two benchmarks, \textsf{freire1-2} and \textsf{cohencu-2}, with templates quadratic in parameters $\bm{a}$, to validate the statement in Remark~\ref{remark:parameter}. 
    The other benchmarks use templates that are linear in parameters~$\bm{a}$ by default.
    }
    (2) For programs abstracted from dynamical systems in literature, we try to search for ellipsoid-shaped invariants and do not know whether the template is robust.
    In all these examples, we restrict the number of parameters $\bm{a}$ to not exceed 3. 
    Since \tool{PolySynth} lacks support for specified domains, we do not supply $C_{\bm{x}}$ to it in these benchmarks. 
    For these benchmarks, we primarily compare our tool with complete approaches \tool{PolySynth} and Z3, as LIPuS does not support floating-point data. 
    \item \textbf{Masked benchmarks} \revise{include programs without nested loops from~\cite{rodriguez2016some}.
    Some benchmarks (like \textsf{euclidex2} and \textsf{cohendiv}) are adapted so that they do not involve integer operations (such as $\gcd(x,y)$).    
    We also added a manually constructed parameterized benchmark \textsf{sum-k-power-d}, which can be found at the end of this section.
    }
    In these benchmarks, the program variables and constants are mostly integers, treated as a subset of $\Real$. 
    For these programs, as in \cref{ex:2}, we set the invariant templates to contain all monomials of core variables up to a given degree (determined  by the real invariants obtained by experts). 
    \revise{
    Moreover, since program variables are unbounded in these benchmarks, we do not add the bound restriction (i.e., $C_{\bm{x}}=\Real^n$) and the Mask algorithm remains sound (see \cref{remark:homo}).
    }
    For these benchmarks, the comparisons encompass \tool{PolySynth}, LIPuS and Z3.
\end{itemize}

For our two algorithms and \tool{Z3}, we manually encode the information of the programs and the templates.
For \tool{PolySynth}, the inputs are the programs and the invariant templates.  
\tool{LIPuS} does not need templates and the inputs are just the programs.
The experimental results are reported in \cref{tab:1} and \cref{tab:2}, respectively.

\begin{remark}
It is worth noting that the results of \tool{PolySynth} and \tool{LIPuS} are not directly comparable with the results reported in their original papers, as the specifications and templates can be different.
For example, in \textsf{freire1} example, the template in our experiment is given as in \cref{ex:2} with $6$ parameters, while in~\cite{GHM+23OOPSLA} they assume $2$ parameters in the invariant, i.e., $y=a_1x+a_2+r^2-r$.
For many other examples in~\cite{GHM+23OOPSLA}, such as \textsf{berkeley}, they assume the invariants are known and try to synthesize unknown parameters in the loop body.
\end{remark}

\begin{table}[t]
\captionsetup{font={small}}
\caption{Experimental results over Cluster benchmarks.}
\label{tab:1}
\begin{center}
    \begin{tabular}{l cc ccr c cr c cr} 
    \toprule
    ~ & ~ & ~ &\multicolumn{3}{c}{Cluster~(\cref{alg:cluster})}& ~ 
      &\multicolumn{2}{c}{\tool{PolySynth}~\cite{GHM+23OOPSLA}}& ~
      &\multicolumn{2}{c}{Z3~\cite{Z3}}\\ 
    \cmidrule{4-6} \cmidrule{8-9} \cmidrule{11-12}
    Benchmark & $n_{\bm{x}}$ & $n_{\bm{a}}$ & $D$ &result & time & ~ & result & time & ~ & result & time\\ 
    \midrule
    \textsf{freire1-1} & 3 & 2 & 1 & \greencheck & 0.7s & ~ & \greencheck & 3.6s & ~ & \greencheck & \textbf{0.1s}\\
    \textsf{freire1-2} & 3 & 2& 2 & \greencheck & \textbf{4.2s} & ~ & \myred{TO} & >600s & ~ & \myred{TO} & >600s\\
    \textsf{freire1-3} & 3 & 2& 1 & \greencheck & \textbf{0.8s} & ~ & \greencheck & 3.1s & ~ & \greencheck & 61.0s\\
    \textsf{cohencu-1} & 2 & 3 & 2 & \greencheck & 5.4s & ~ & \myred{TO} & >600s & ~ & \greencheck & \textbf{0.2s}\\
    \textsf{cohencu-2} & 2 & 3 & 3 & \greencheck & \textbf{9.4s} & ~ & \myred{TO} & >600s & ~ & \myred{TO} & >600s\\
    \textsf{cohencu-3} & 2 & 3 & 3 & \greencheck & 11.2s & ~ & \myred{TO} & >600s & ~ & \greencheck & \textbf{0.2s}\\
    \midrule
    \cref{ex:1} & 2 & 2 & 3 &\greencheck & \textbf{3.0s} & ~ & \myred{TO} & >600s & ~ & \myred{TO} & >600s \\
    \textsf{circuit}~\cite{AMT+21} & 2 & 2 & 2 &\greencheck & \textbf{2.8s} & ~ & \myred{TO} & >600s & ~ & \myred{TO} & >600s \\
    \textsf{unicycle}~\cite{sassi2012controller} & 2 & 3 & $\ge 7$ & \myred{TO} & >600s & ~ & \myred{TO} & >600s & ~ & \myred{TO} & >600s \\
    \textsf{overview}~\cite{dai2013nonlinear} & 2 & 3 & $\ge 8$ &\myred{TO} & >600s & ~ & \myred{TO} & >600s & ~ & \myred{TO} & >600s \\
    \bottomrule
    \end{tabular}
    \end{center}
    \scriptsize{
    $n_{\bm{x}}, n_{\bm{a}}$: the number of variables and parameters, respectively.   
    For the Cluster algorithm: $D$ is the smallest degree upper bound such that a non-empty under-approximation can be found,
    ``time'' includes the solving SDP time and the posterior verification time. 
    \myred{TO}: timeout, 600s.
    The boldface marks the winner.
    }
\end{table}

\revise{
\paragraph{Numeric vs Symbolic} 
One important feature of our approach is that we use a numerical solver. 
While benefiting from the efficiency of numerical algorithms, the produced results may be unsound due to numerical errors. For example, as shown in \cref{ex:2}, there is a tiny gap between the numerical result and the real invariant. 
\revise{
Though there exist exact solver~\cite{Henrion18,henrion19-spectra} and arbitrary-precision solver~\cite{JoldesMP17}, our experience shows that these tools do not scale for the SDP problems translated from our benchmarks.
}
In our experiments, we employ the following strategies to deal with numerical errors:
\begin{itemize}
    \item \textbf{Rounding Off}: For the Cluster algorithm, since the invariant template is robust, the obtained valid parameter assignment $\bm{a}_0$ is usually an interior point of the valid set.
    This makes the result tolerant to small numerical errors. Even with slight perturbations, $\bm{a}_0+\epsilon$ remains valid for small $\epsilon$.
    While SDP solvers usually offer accuracies around $10^{-8}$~\cite{Roux16},
    there is no guarantee of the accuracy of solutions for SOS relaxations. 
    In our experience, rounding off to five decimal places can achieve relatively accurate results.
    \item \textbf{Rationalization}: 
    For the Mask algorithm, the invariant template is not robust as equalities are involved. 
    As a result, solely relying on the rounding off strategy may produce incorrect answers.
    In addition, the benchmarks contain problem instances that require rational coefficients like $\frac{5}{12}$, which can not be expressed by floating point numbers.
    To this end, we employ the idea from~\cite{kaltofen12jsc}: we first rationalize the numerical result, then verify its correctness. 
    The rationalization is achieved using the built-in function \textsf{rat} in \textsf{Matlab}, which returns the rational fraction approximation of the input to within a specified tolerance ($10^{-5}$).
    For example, when a numerical result $0.41667$ is obtained, the function \textsf{rat} transforms it into a continued fractional expansion $0 + 1/(2 + 1/(3 + 1/(-2)))$, which can then be simplified to $\frac{5}{12}$.
    \item \textbf{Posterior Verification}: 
    The above two strategies can not address all numerical problems. 
    To guarantee soundness, when a numerical solution is obtained, we use \tool{Mathematica} to verify the correctness of the corresponding invariant candidate. 
\end{itemize}
}

\begin{table}[t]
\captionsetup{font={small}}
\caption{Experimental results over masked template benchmarks.}
\label{tab:2}
\begin{center}
    \begin{tabular}{c cccc cc c cc c cc} 
    \toprule
    ~ & ~ & ~ & ~ & ~ &\multicolumn{2}{c}{Mask~(\cref{alg:mask})}& ~ 
  &\multicolumn{2}{c}{\tool{PolySynth}~\cite{GHM+23OOPSLA}} & ~
  &\multicolumn{2}{c}{LIPuS~\cite{YWW23ISSTA}}  \\ 
    \cmidrule{6-7} \cmidrule{9-10} \cmidrule{12-13} 
    Benchmark & $n_{\bm{y}}$ & $n_{\bm{z}}$ & $n_{\bm{a}}$ & $k$ & result & time & ~ & result & time & ~ & result & time$\dagger$ \\
    \midrule
    \textsf{berkeley} & 4 & 1 & 5& 4& \greencheck & 3.8s & ~ & \myred{TO} & >600s & ~ & \myred{TO} & >600s\\
    \textsf{cohencu}  & 1 & 3 & 12& 1 & \greencheck & 1.7s & ~ & \myred{TO} & >600s
    & ~ & \myred{TO} & >600s\\
    \textsf{cohendiv}  & 4 & 1 & 15& 1 & \greencheck & 0.6s & ~ & \greencheck & 6.8s & ~ & (\greencheck) & (142.0s)\\
    \textsf{euclidex2}  & 6 & 2 & 56& 2 &\greencheck & 4.0s & ~ & \myred{TO} & >600s& ~ & \myred{TO} & >600s\\
    \textsf{fermat2}  & 4 & 1 & 15& 2 &\greencheck & 0.9s & ~ & \greencheck & 10.3s& ~ & \myred{TO} & >600s\\
    \textsf{firefly}  & 4 & 1 & 5& 7 & \greencheck & 10.9s & ~ & \myred{TO} & >600s& ~ & \myred{TO} & >600s\\
    \textsf{freire1}  & 2 & 1 & 6& 1 &\greencheck & 0.7s & ~ & \greencheck & 70.4s& ~ & (\greencheck) & (460.0s)\\
    \textsf{freire2}  & 4 & - & -& 1 & \myred{NS} & - & ~ & \myred{TO} & >600s  & ~ & \myred{NS} & - \\
    \textsf{illinois}  & 4 & 1 & 5& 10 &\greencheck & 17.0s & ~ & \myred{TO} & >600s& ~ & \myred{TO} & >600s\\
    \textsf{lcm}  & 6 & 1 & 28 & 2 & \greencheck & 1.3s & ~ & \greencheck & 17.1s& ~ & \myred{TO} & >600s\\
    \textsf{mannadiv}  & 4 & 1 & 15& 3 & \greencheck & 1.2s & ~ & \myred{TO} & >600s& ~ & \myred{TO} & >600s\\
    \textsf{mesi}  & 4 & 1 & 5& 4 &\greencheck & 4.5s & ~ & \myred{TO} & >600s& ~ & (\greencheck) & (592.5s)\\
    \textsf{moesi}  & 5 & 1 & 6& 5 &\greencheck & 7.7s & ~ & \myred{TO} & >600s& ~ & (\greencheck) & (117.2s)\\
    \textsf{petter}  & 1 & 1 & 7& 1 &\greencheck & 0.6s & ~ & \greencheck & 4.4s& ~ & \myred{TO} & >600s\\
    \textsf{readerswriters}  & 5 & 1 & 21& 4 & \greencheck & 3.4s & ~ & \myred{TO} & >600s& ~ & \myred{TO} & >600s\\
    \textsf{sqrt}  & 4 & - & -& 1 &\myred{NS} &  - & ~ & \myred{TO} & >600s& ~ & (\greencheck) & (421.0s)\\
    \textsf{wensley}  & 7 & - & -& 2 & \myred{NS} &  - & ~ & \myred{TO} & >600s & ~ & \myred{NS} &  - \\
    \textsf{z3sqrt}  & 4 & 1 & 15& 2& \greencheck & 1.1s & ~ & \myred{TO} & >600s  & ~ & \myred{NS} &  - \\ \hline
    %
    \textsf{sum2power10}  & 2 & 1 & 66& 1 &\greencheck & 8.6s & ~ & \myred{TO} & >600s  & ~ & \myred{TO} &  >600s \\
    \textsf{sum2power15}  & 2 & 1 & 136& 1 &\myred{\xmark} & 200.3s & ~ & \myred{TO} & >600s  & ~ & \myred{TO} &  >600s \\
    \textsf{sum3power6}  & 3 & 1 & 84 & 1 &\greencheck & 9.9s & ~ & \myred{TO} & >600s  & ~ & \myred{TO} &  >600s \\
    \textsf{sum3power8}  & 3 & 1 & 102 & 1 &\myred{\xmark} &  102.4 & ~ & \myred{TO} & >600s  & ~ & \myred{TO} &  >600s \\
    \textsf{sum5power4}  & 5 & 1 & 126 & 1 &\greencheck &  26.6s & ~ & \myred{TO} & >600s  & ~ & \myred{TO} &  >600s \\
    \textsf{sum5power5}  & 5 & 1 & 252 & 1 &\myred{TO} &  >600s & ~ & \myred{TO} & >600s  & ~ & \myred{TO} &  >600s \\
    \textsf{sum8power3}  & 8 & 1 & 165 & 1 &\greencheck & 184.6s & ~ & \myred{TO} & >600s  & ~ & \myred{TO} &  >600s \\
    \bottomrule
    \end{tabular}
    \end{center}
    \scriptsize{
    $n_{\bm{y}}, n_{\bm{z}}, n_{\bm{a}}$: the number of core variables, non-core variables, and parameters in templates, respectively. $k$: the number of branches in loop body.
    For the Mask algorithm, ``time'' includes the solving SDP time and the posterior verification time. 
    \myred{TO}: timeout, 600s. \myred{NS}: unsupported benchmarks where there are no non-core variables (Mask algorithm) or containing floating-point variables (not allowed in LIPuS). 
    \myred{\xmark}: fail to synthesize an invariant or unsound invariant.
    $\dagger$: the results in the last column were provided by the authors of LIPuS using their computational environment in~\cite{YWW23ISSTA}.
    }
\end{table}

\paragraph{Experimental Results over Cluster Benchmarks} 
For all benchmarks in the first group, our algorithm successfully synthesized a non-empty under-approximation $\{\bm{a}\mid h_D(\bm{a})\le 0\}$ with $D\le 3$, but there were a few cases where directly using Z3 outperformed our approach.
Note that both \tool{PolySynth} and Z3 failed in synthesizing a valid invariant for \textsf{freire1-2} and \textsf{cohencu-2}, this was possibly attributed to the fact that the templates are quadratic in parameters $\bm{a}$.
Although the problems in the first group are relatively easy, they already pose a challenge for \tool{PolySynth} and Z3. 
As for the second group, our algorithm succeeded in the first two problem instances, while the other two tools failed on all problems.

\revise{
This answers \textbf{RQ1}: The Cluster algorithm can efficiently synthesize under-approximations to strong invariant synthesis problems with a relatively small number of parameters. Additionally, its outputs simplify the weak invariant synthesis problem.
However, our approach also has limitations. 
}The Cluster algorithm lacks a termination criterion when a nonempty under-approximation cannot be found. 
This means that the algorithm might run indefinitely in such cases.
In addition, as the degree bound $D$  increases, the time required to solve both the SDP and the polynomial inequality $h_D(\bm{a})\le 0$ also grows significantly.

\paragraph{Experimental Results over Masked Benchmarks}
In the first group of benchmarks, our algorithm demonstrated its effectiveness by successfully synthesizing valid invariants for 14 out of 18 problem instances. 
The runtimes of our algorithm for this set of benchmarks were typically under 10 seconds, demonstrating its practical applicability and efficiency.
However, the \textsf{cohencu} benchmark failed due to a small numerical error (approximately $10^{-5}$).  
In the three unsupported benchmarks, we were unable to identify the core variables according to our definition.
In comparison, \tool{PolySynth} solved only five instances, while LIPuS and Z3 (omitted from the table) failed to produce results for all instances in 10 minutes.
To fully understand LIPuS's capacity, we collaborated with its creators to evaluate our benchmarks within their environment. 
The results are displayed in the final column of Table~\ref{tab:2}, distinguished by parentheses.

The second group of benchmarks demonstrates the scalability of the Mask algorithm, as it can handle problem instances with up to a few hundred parameters within a 10-minute timeout.
However, its performance degrades when the degree of the template increases due to numerical issues in the underlying solvers. This is evident in cases like \textsf{sum2power15} and \textsf{sum3power8}, where the solver incorrectly returns "infeasible" despite the existence of a real invariant. 
The \textsf{sumpower} benchmarks are tailored to the Mask algorithm. Therefore, a direct comparison with other tools might not be entirely fair.

\revise{
Overall, the experimental results answer \textbf{RQ2}: Compared to existing methods, the Mask algorithm demonstrated superior efficiency and scalability on benchmarks with a relatively large number of parameters.
}

\paragraph{Sum-k-Power-d Benchmark}
This is a manually constructed example with no practical meanings, solely intended to demonstrate how the number of parameters will influence the efficiency of our algorithm.
It is easy to see $n_{\bm{a}}=\binom{k+d}{d}$.

\begin{listing}[ht]
\begin{minted}[mathescape, escapeinside=||]{c}
    // Program variables: $(n_1,n_2,\dots,n_k,s)\in \Real^n$
    // Precondition: $s=(n_1+n_2+\dots+n_k)^d$
    // Invariant template: {$s= {\color{purple} poly[(n_1,\dots,n_k),d]}$}
    while ( true ) { // Real invariant $s=(n_1+n_2+\dots+n_k)^d$
        |$n_1 \gets n_1 + 1$|;
        |$\dots$|;
        |$n_k \gets n_k + 1$|;
        |$s \gets (n_1+n_2+\dots+n_k+k)^d - (n_1+n_2+\dots+n_k)^d$|;
    }
    // Postcondition: true
\end{minted}
\caption{\textsf{sum-$k$-power-$d$}}
\label{code:sumpower}
\end{listing}


\section{Related Work}
\label{sec:related}

In this section, we present different methods for invariant synthesis and compare our approaches with the most related works.

\paragraph{Constraint Solving}
As constraint-solving techniques have made significant advancements in recent years, constraint-solving-based approaches have become increasingly relevant and promising.
Specifically, for synthesizing linear invariants, \cite{CSS03CAV} proposes the first complete approach based on Farkas' lemma, which can be seen as a linear version of Putinar's Positivstellensatz (\cref{thm:putinar}), and solves the resulting nonlinear constraints by quantifier elimination.
However, due to the doubly exponential time complexity of quantifier elimination procedures~\cite{DH88}, this method is impractical even for programs of moderate size.
Therefore, many works consider using heuristics to solve the nonlinear constraints for better scalability~\cite{SSM04SAS,LFY+22OOPSLA}.

The problem of synthesizing polynomial invariants for polynomial programs is more challenging.
For both the weak and the strong invariant synthesis problem, \cite{kapur05inv} introduces the first complete approach based on quantifier elimination.
For the strong invariant synthesis problem, when coefficients in templates are rational numbers, \cite{chatterjee2020polynomial} shows that the complexity bound can be improved to subexponential in the length of the template. 
For the weak invariant synthesis problem, 
subsequent works can be broadly categorized into two classes: 
One group focuses on efficiently solving the general constraints of invariant conditions~\cite{YZZ+10FCSC,chatterjee2020polynomial,GHM+23OOPSLA}, while the other group strengthens the invariant conditions to make the constraints easier to solve~\cite{Cousot05VMCAI, LWY+14FCS,AGM15SAS}. 
\revise{A concise overview of these approaches is presented in Table~\ref{tab:sum}.}  
In the following, we compare the technical differences between our work and some related works which also use~\cref{thm:putinar}. 

Comparison with~\cite{chatterjee2020polynomial,GHM+23OOPSLA}: 
These two papers provide a systematic way to encode the invariant conditions of programs represented by control-flow graphs.
As discussed in Remark~\ref{remark:bmi}, they also employ~\cref{thm:putinar} to translate the invariant conditions into constraints involving SOS polynomials (essentially bilinear matrix inequalities). 
To handle these constraints, they further encode them into quadratic programs and rely on general-purpose solvers.
\revise{
In contrast, our algorithms encode constraints into SDPs.
This is achieved by utilizing Lasserre's technique~\cite{lasserre15} (in the Cluster algorithm) and by exploiting specific patterns in templates (in the Mask algorithm).
}

Comparison with~\cite{LWY+14FCS,AGM15SAS}: 
These two papers deal with the weak invariant synthesis problem by strengthening the invariant conditions in the form of Prog.~(\ref{eq:sos}), allowing for standard SOS relaxations.
However, their techniques are limited to polynomial templates, which means the invariant must be the 0-sublevel set of a single polynomial and do not have completeness guarantees.
As a result, these approaches can not synthesize invariants for programs in our masked template benchmarks. 


\begin{table}[t]
\captionsetup{font={small}}
\caption{Summary of constraint-solving-based approaches for polynomial invariant synthesis.}
\label{tab:sum}
\centering
    \begin{tabular}{| c |c | c | c | c | c | c | c |} 
    \hline
    \textbf{Invariant} &\textbf{Algorithm} & \textbf{Loop} & \textbf{Template} & \textbf{Constraint} & \textbf{Guarantee} & \textbf{Size of $\sigma$}\\ \hline 
    \hline
    \multirow{2}{*}{Strong} & Cluster (\cref{sec:under-approximation}) & nested & basic semi.  & SDP  & convergence$^*$ & $\binom{n+n'+d_r}{d_r}$\\\cline{2-7} 
    & \cite{kapur05inv,chatterjee2020polynomial} & nested & basic semi. & FOL & complete & - \\\hline\hline 
    \multirow{3}{*}{Weak} & Mask (\cref{sec:masked}) & simple & masked & SDP  & semi-complete  & $\binom{n_{\bm{y}}+d_r}{d_r}$\\\cline{2-7}
    & \cite{LWY+14FCS,AGM15SAS} & simple & poly. & SDP & - & $\binom{n+d_r}{d_r}$ \\\cline{2-7}
    & \cite{chatterjee2020polynomial, GHM+23OOPSLA} & nested & basic semi. & QP & semi-complete & $ \binom{n+d_r}{d_r}$ \\\hline
 \end{tabular}
 \\ \scriptsize{
    \textbf{Invariant}: the strong or weak invariant synthesis problem.
    \textbf{Loop}: the structure of loop models: ``nested'' means nested loops, and ``simple'' means non-nested loops. 
    \textbf{Template}: the type of invariant templates.
    \textbf{Constraint}: the form of encoded constraints: ``QP'' means quadratic programming,  ``FOL'' means first-order logic in reals.  
    \textbf{Guarantee}: the theoretical guarantee (all these approaches are sound), and the asterisk ($^*$) means the robustness assumption is needed.
    \textbf{Size of $\sigma$}: the size of Gram matrices of SOS polynomials (measuring the magnitude of the constraints for methods based on \cref{thm:putinar}), where $d_r$ is the relaxation order and  $n$, $n'$, and $n_{\bm{y}}$ are the dimensions of $\bm{x}$, $\bm{a}$, and $\bm{y}$, respectively.
    }
\end{table}

\paragraph{Craig Interpolation}
Craig interpolation is a power tool for local and modular reasoning.
In first-order logic, if a formula $P$ implies a formula $Q$, then there exists a formula $I$, called a \emph{Craig interpolation} or simply \emph{interpolation}, such that $P$ implies $I$, $I$ implies $Q$, and every non-logical symbol in $I$ occurs in both $P$ and $Q$.
\revise{
In program verification, interpolants can serve as invariants, though they may not always be inductive.
\cite{lin17ase-squeeze} introduces an algorithmic framework for generating interpolants and subsequently strengthening them into inductive invariants. 
Similar approaches have been explored in~\cite{dai2013nonlinear,ijcar16,Gan2020} for synthesizing nonlinear invariants for polynomial programs. 
Craig interpolation has also been employed in model checking techniques for invariant generation, leading to a diverse range of algorithms~\cite{mcmillan06cav,mcmillan03cav,komuravelli14cav-recursive,cimatti16fmsd,hojjat18fmcad,blicha22tacas}.
}

\paragraph{Abstract Interpretation}
Abstract interpretation is a widely used and classic method for invariant generation~\cite{OS04IPL, rck04a, BRZ05SAS, rck07, AGG12LMCS}. 
The process involves fixing an abstract domain and iteratively performing forward propagation until a fixed point is reached, which serves as an invariant.
The effectiveness and efficiency of abstract interpretation approaches heavily rely on the choice of abstract domains. 
Different abstract domains may lead to varying levels of precision and scalability in the obtained invariants.
In most cases, there is no theoretical guarantee of the accuracy of generated invariants. 
In other words, it is uncertain whether the obtained invariant is strong enough to accurately represent the desired properties of the system under analysis. 
The absence of such guarantees necessitates careful consideration of the abstract domains and fine-tuning of the analysis to strike a balance between precision and tractability.

\revise{
\paragraph{Recurrence Analysis}
Recurrence-based methods \cite{rc04issac,rck07,Kovacs08TACAS,humenberger2018invariant,farzan15fmcad, kincaid2017nonlinear} typically involve these steps: 
(1) extracting recurrences from loops; 
(2) computing closed-form solutions for loop variables; and 
(3) deriving (equality) invariants from the solutions.
One major limitation of these methods is that they are restricted to loops with \emph{solvable mappings}~\cite{rc04issac,rck07} (and minor generalizations thereof), a restricted subclass of loops where the underlying recurrences are solvable.
Intuitively, solvable mappings generalize affine mappings by allowing certain acyclic nonlinear dependencies between variables.
Recent advancements~\cite{amrollahi22sas,cyphert24popl-ideal,wang-lin24cav} aim to handle the cases where recurrences for loop variables do not exist or are not solvable, by employing template-based ideas to generate recurrences for expressions in variables.
}

%


\paragraph{Other Methods}
Recently, methods based on machine learning~\cite{HSP+20PLDI, si2020code2inv, yao2020learning, YWW23ISSTA} and  logical inference~\cite{DDL+13OOPSLA,SA14CAV,PWK+22POPL,KPS+22TACAS} have shown significant promise.
Beyond classic programs, the problem of invariant generation is also being actively explored in the context of hybrid systems~\cite{DGX+17JSC,WCX+22IC,SP2023arXiv} and stochastic systems~\cite{CHW+15CAV, BTP+22CAV,BCJ+23TACAS}, combining techniques from differential equations and probability theory.





\revise{
We would like to emphasize that our definition of strong invariants differs from the concept of ``strongest (affine/polynomial) invariants'' commonly used in research on computability results, such as~\cite{karr76,OS04ICALP,joel2018polynomial,hrushovski23jacm,mullner24popl}.
These studies typically focus on computing sets of affine or polynomial \emph{equalities} that serve as loop invariants. In contrast, our work considers \emph{inequalities} of specific parameterized forms. For a comprehensive overview of these related results, we recommend consulting~\cite{mullner24popl}.
}

\section{Conclusions and Future Work}
\label{sec:conclusion}
\revise{
In this paper, we present two novel SDP-based approaches to synthesize invariants for polynomial programs, expanding the boundaries of constraint-solving-based invariant synthesis methods.
For the strong invariant synthesis problem, the Cluster algorithm employs a technique from robust optimization~\cite{lasserre15} to under-approximate the valid set.
For the weak invariant synthesis problem, the Mask algorithm relies on identifying special structures in program invariants.
Both our algorithms are sound and semi-complete.
}

Currently, the Cluster algorithm becomes impractical when the template includes an excessive number of parameters.
This limitation arises because the size of SOS polynomials in the relaxations depends on the total number of program variables and parameters.
To address this problem, we consider exploring the internal structure of the constraints to improve the algorithm.
Moreover, we plan to extend the techniques presented in this paper 
to invariant synthesis for hybrid systems and probabilistic programs. 




\begin{acks}
We are grateful to the anonymous reviewers for their insightful feedback and constructive criticism, which significantly improved the quality of this paper.
We also thank Shiwen Yu for his assistance in producing the benchmark results of LIPuS.
\end{acks}

\bibliographystyle{ACM-Reference-Format}
\bibliography{inv}



\end{document}